\def\polylog{\operatorname{polylog}} %
\newcommand{\countlessthan}[2]{\downarrow_{#2}\!\!(#1)} 
\newcommand{\APPROXCOUNT}{\textsc{ApxCnt}\xspace}
\newcommand{\JQs}{JQs\xspace}
\newcommand{\JQ}{JQ\xspace}
\newcommand{\quant}{\phi}
\newcommand{\qanswer}{q}
\newcommand{\Wvars}{{U_\w}}
\newcommand{\wval}{\lambda}
\newcommand{\Waggr}{\mathtt{agg}_\w}
\newcommand{\Wdom}{\dom_\w}
\newcommand{\sumset}{\sigma}
\newcommand{\w}{w}
\newcommand{\rankf}{subset-monotone\xspace}
\newcommand{\TRIM}{\textsc{trim}\xspace}
\newcommand{\PIVOT}{\textsc{pivot}\xspace}
\newcommand{\hyperclique}{\textsc{Hyperclique}}
\newcommand{\ThreeSUM}{\textsc{3sum}}
\newcommand{\calH}{{\mathcal H}}
\newcommand{\calS}{{\mathcal S}}
\newcommand{\var}{\texttt{var}}
\newcommand{\bigO}{{\mathcal{O}}} %
\newcommand{\mh}{\textrm{mh}}
\newcommand{\dom}{\texttt{dom}}
\newcommand{\ar}{\texttt{ar}}
\newcommand{\wm}{\textsc{wmed}}
\newcommand{\pivot}{\textrm{pivot}}
\def \cnt{\texttt{cnt}}
\def \mval{\texttt{val}}
\newcommand{\N}{\mathbb{N}} %
\newcommand{\R}{{\mathbb{R}}} %
\def\e#1{\emph{#1}}
\newcommand{\introparagraph}[1]{\textbf{#1.}} %
\newcommand{\datarule}{{\,:\!\!-\,}} %
\newcommand{\nikos}[1]{{{\color{blue}{[\textbf{Nikos}: #1]}}}} %
\newcommand{\nikosdone}[1]{} %
\newcommand{\nofar}[1]{{{\color{violet}{[\textbf{Nofar}: #1]}}}} %
\newcommand{\nofardoe}[1]{} %
\newcommand{\benny}[1]{{{\color{olive}{[\textbf{Benny}: #1]}}}} %
\newcommand{\wolf}[1]{{{\color{magenta}{[\textbf{WG}: #1]}}}} %
\newcommand{\wolfdone}[1]{} %
\newcommand{\mirek}[1]{{{\color{cyan}{[\textbf{MR}: #1]}}}} %
\newcommand\reviewer[2]{{\color{red}[\textbf{Reviewer #1:} #2]}}
\newcommand{\nikosrem}[1]{} 
\newcommand{\nofarrem}[1]{}
\newcommand{\wolfrem}[1]{} %
\newcommand{\mirekrem}[1]{}
\newcommand{\todo}[1]{{{\color{red}{[TODO: #1]}}}} %
\newcommand{\ourversion}[1]{#1} 
\newcommand{\hide}[1]{} 
\newcommand{\hidetwo}[2]{}
\newcommand{\mkclean}{

    \renewcommand{\ourversion}{\hide}

    \renewcommand{\mirek}{\hide}
    \renewcommand{\wolf}{\hide}
    \renewcommand{\nikos}{\hide}
    \renewcommand{\nofar}{\hide}
    \renewcommand{\benny}{\hide}
    \renewcommand{\reviewer}{\hidetwo}
    \renewcommand{\todo}{\hide}
}
\def\rel#1{\mathsf{#1}}
\def\att#1{\mathrm{#1}}
\newcommand{\eat}[1]{}
\newcommand{\algocomment}[1]{\textcolor{teal}{{//#1}}}
\def\ranklex{\ensuremath{\mathsf{LEX}}\xspace}
\def\rankmin{\ensuremath{\mathsf{MIN}}\xspace}
\def\rankmax{\ensuremath{\mathsf{MAX}}\xspace}
\def\ranksum{\ensuremath{\mathsf{SUM}}\xspace}
\begin{document}

\title{Efficient Computation of Quantiles over Joins}

\author{Nikolaos	Tziavelis}
\orcid{0000-0001-8342-2177}
\affiliation{%
  \institution{Northeastern University}
  \city{Boston}
  \state{MA}
  \country{United States}
}
\email{tziavelis.n@northeastern.edu}

\author{Nofar	Carmeli}
\orcid{0000-0003-0673-5510}
\affiliation{%
  \institution{Inria, LIRMM, Univ Montpellier, CNRS}
  \city{Montpellier}
  \country{France}
}
\email{nofar.carmeli@inria.fr}

\author{Wolfgang	Gatterbauer}
\orcid{0000-0002-9614-0504}
\affiliation{%
  \institution{Northeastern University}
  \city{Boston}
 \state{MA}
 \country{United States}
}
\email{w.gatterbauer@northeastern.edu}

\author{Benny Kimelfeld}
\orcid{0000-0002-7156-1572}
\affiliation{%
  \institution{Technion - Israel Institute of Technology}
  \city{Haifa}
  \country{Israel}
}
\email{bennyk@technion.ac.il}

\author{Mirek	Riedewald}
\orcid{0000-0002-6102-7472}
\affiliation{%
  \institution{Northeastern University}
  \city{Boston}
 \state{MA}
  \country{United States}
}
\email{m.riedewald@northeastern.edu}

\renewcommand{\shortauthors}{Nikolaos Tziavelis, Nofar Carmeli, Wolfgang Gatterbauer, Benny Kimelfeld, \& Mirek Riedewald} %

\begin{abstract}
We present efficient algorithms for Quantile Join Queries,
abbreviated as \%\JQ. A \%\JQ asks for the answer at a specified
relative position (e.g., 50\% for the median)
under some ordering over the answers to a Join Query (\JQ).
Our goal is to avoid materializing the set of all join answers,
and to achieve quasilinear time in the size of the database,
regardless of the total number of answers.
A recent dichotomy result rules out the existence of such an algorithm for
a general family of queries and orders.
Specifically, for acyclic \JQs without self-joins,
the problem becomes intractable 
for ordering by sum whenever
we join more than two relations (and these joins are not trivial intersections).
Moreover, even for basic ranking functions beyond sum, such as min or max 
over different attributes, so far it is not known
whether there is any nontrivial tractable \%\JQ.

In this work, we develop a new approach to solving \%\JQ and show how this
approach allows not just to recover known results,
but also generalize them and resolve open cases.
Our solution uses two subroutines: 
The first one needs to select what we call a ``pivot answer''.
The second subroutine partitions the space of query answers
according to this pivot, and continues searching in one partition that is
represented as new \%\JQ over a new database.
For pivot selection, we develop an algorithm that
works for a large class of ranking functions that are appropriately monotone.
The second subroutine requires a customized construction for the
specific ranking function at hand. 

We show the benefit and generality of our approach by using it to establish several new complexity results. 
First, we prove the tractability of min and max for all acyclic \JQs, thereby resolving the above question. 
Second, we extend the previous \%JQ
dichotomy for sum to all partial sums (over all subsets of the attributes). 
Third, we handle the intractable cases of sum 
by devising a \emph{deterministic} approximation scheme that applies to \emph{every acyclic \JQ}. 

\end{abstract}

\begin{CCSXML}
<ccs2012>
<concept>
<concept_id>10003752.10010070.10010111.10011711</concept_id>
<concept_desc>Theory of computation~Database query processing and optimization (theory)</concept_desc>
<concept_significance>300</concept_significance>
</concept>
<concept>
<concept_id>10003752.10010070.10010111</concept_id>
<concept_desc>Theory of computation~Database theory</concept_desc>
<concept_significance>500</concept_significance>
</concept>
</ccs2012>
\end{CCSXML}

\ccsdesc[500]{Theory of computation~Database query processing and optimization (theory)}
\ccsdesc[300]{Theory of computation~Database theory}

\keywords{join queries, quantiles, median, ranking function, answer order, pivot, approximation, inequality predicates}

\maketitle

\section{Introduction}

Quantile queries ask for the element at a given relative position $\quant\in[0,1]$ in a given list $L$ of items~\cite{rice2007msd}. For example, the lower quartile, median, and upper quartile are the elements at positions $\quant=0.25$, $\quant=0.5$, and $\quant=0.75$, respectively.
We investigate quantile queries where $L$ is the result of a Join Query (\JQ) $Q$ over a database, with a ranking function that determines the order between the answers. 
Importantly, the list $L$ can be much larger than the input database $D$ (specifically, $L$ can be $\Omega(|D|^k)$ for some degree $k$ determined by $Q$), 
and so, $Q$ and $D$ form a compact representation for $L$. 
Our main research question is \e{when we can find the quantile in quasilinear time}.
In other words, the time suffices for reading $D$, but we are generally prevented
from materializing $Q(D)$.

For illustration, consider a social network where users organize events, share event announcements, and declare their plans to attend events. It has the three relations
$\rel{Admin}(\att{user},\att{event})$,
$\rel{Share}(\att{user},\att{event},\att{\#likes})$, and
$\rel{Attend}(\att{user},\att{event},\att{\#likes})$. 
We wish to extract 
statistics about triples of users involved in events, beginning by joining the three relations
using the \JQ
\[
    \rel{Admin}(u_1,e)\,,\,\rel{Share}(u_2,e,l_2)\,,\,\rel{Attend}(u_3,e,l_3).
\]
Now, suppose that all we do is apply a quantile query to the result of $Q$, say the $0.1$-quantile ordered by
$l_2+l_3$ (the sum of likes of the share and the participation).
The direct way of finding the quantile is to materialize the join, sort the resulting tuples, and take the element at position $\quant = 0.1$. Yet, this result might be considerably larger than the database, and prohibitively expensive to compute, even though in the end we care only about one value. Can we do better?  This is the research question that we study in this paper. In general, the answer depends on the \JQ and order, and in this specific example we can, actually, do considerably better.

To be more precise, we study the fine-grained data complexity of query evaluation, where the query seeks a quantile over a \JQ. We refer to such a query as a \e{Quantile
Join Query} and abbreviate it as \%\JQ. So, the \JQ $Q$ is fixed, and so is the ranking function (e.g., weighted sum over a fixed subset of the variables). 
The input consists of $D$ and $\quant$. 
In terms of the execution cost, we allow for poly-logarithmic factors, therefore our goal is to devise evaluation algorithms that run in time quasilinear in $D$, that is, $\bigO(|D|\polylog(|D|))$. 

To the best of our knowledge, little is known about the fine-grained complexity of \%\JQ. 
We have previously studied this problem~\cite{carmeli23direct} for Conjunctive Queries
(which are more general than Join Queries since they also allow for projection)
under the name
``\e{selection problem}''.\footnote{To be precise, in the selection problem, the position of this answer is given as an absolute index $i$ rather than a relative position $\quant$. (This problem is sometimes referred to as \e{unranking}~\cite{DBLP:journals/ipl/MyrvoldR01}.)
This difference is nonessential  
as far as this work concerns: 
all our previous lower and upper bounds for selection on \JQs 
apply to \%\JQs.} 
Two types of orders were covered: sum of all attributes and lexicographic orders.
On the face of it, the conclusion from our previous results is that we are extremely limited in what we can do: \e{The problem is intractable for every \JQ with more than two atoms} (each having a set of variables that is not contained in that of another atom, and assuming no self-joins), under conventional conjectures in fine-grained complexity. 
Nevertheless, we argue that our previous results tell only part of the story and miss quite general opportunities for tractability:
\begin{itemize}
    \item What if the sum involves just a \e{subset} of the variables, like in the above social-network example? Then the lower bounds for full sum do not apply. As it turns out, in this case we often can achieve tractability for \JQs of more than 2 atoms.
    \item What if we allow for some small error and not insist on the precise $\quant$-quantile? As we argue later, this relaxation makes the picture dramatically more positive.
\end{itemize}
In addition, there are ranking functions that have not been considered at all, notably minimum and maximum over attributes, such as $\rankmin(\att{rate1},\att{rate2})$ and 
$\rankmax(\att{width},\att{height},\att{depth})$. We do not see any conclusion from past results on these, so the state of affairs (prior to this paper) is that their complexity is an open problem.

In this work, we devise a new framework for evaluating \%\JQ queries. We view the problem as a search problem in the space of query answers, and
 the framework adopts a divide-and-conquer ``pivoting'' approach. 
For a \JQ $Q$, we reduce the problem to two subroutines, given $D$ and $\quant$:
\begin{itemize}
    \item \textbf{\PIVOT}: Find a \e{pivot} answer $p$ such that the set of answers that precede $p$ and the set of answers that follow $p$ both contain at least a constant fraction of the answers.
    \item \textbf{\TRIM}: Partition the answers into three splits: less than, equal to, and greater than $p$.
    Determine which one
    contains the sought answer 
    and, if it is not $p$, produce a new \%\JQ within the relevant split using new $Q'$, $D'$ and $\quant'$.
    We view this operation as \e{trimming} the split condition by updating the database so that the remaining answers satisfy the condition. 
\end{itemize}

We begin by showing that we can select a pivot in linear time for \e{every} acyclic join query and \e{every} ranking function that satisfies 
a monotonicity assumption (also used in the problem of ranked enumeration~\cite{tziavelis22anyk,KimelfeldS2006}),
which all functions considered here satisfy. 
Note that the assumption of acyclicity is required since, otherwise, it is impossible to even determine whether the join query has any answer in quasilinear time, under conventional conjectures in fine-grained complexity~\cite{bb:thesis}. Hence, the challenge really lies in trimming. 

\subsubsection*{Contributions}
Using our approach, we establish efficient algorithms for several classes of queries and ranking functions, where we show how to solve the trimming problem.
\begin{enumerate}
  \item We establish tractability for all acyclic \JQs under the ranking functions $\rankmin$ and $\rankmax$.
  \item %
  We recover (up to logarithmic factors) all past tractable cases~\cite{carmeli23direct} for lexicographic orders and $\ranksum$.
  \item %
  For self-join-free \JQs and \ranksum, we complete the picture
  by extending the previous dichotomy~\cite{carmeli23direct}
  (restricted to \JQs)
  to all \e{partial sums}.
\end{enumerate}

We then turn our attention to approximate answers. Precisely, we find an answer at a position within
$(1\pm\epsilon)\quant$ for an allowed error $\epsilon$.
(This is a standard notion of approximation for quantiles~\cite{DBLP:conf/sigmod/RajagopalanML98,DBLP:conf/icdt/DoleschalBKM21}.)
To obtain an efficient \e{randomized} approximation, it suffices to be able to construct in quasilinear time a direct-access structure for the underlying \JQ, regardless of the answer ordering; if so, then one can use a standard median-of-samples approach (with Hoeffding's inequality to guarantee the error bounds). 
Such algorithms for direct-access structures have been established in the past for arbitrary acyclic \JQs~\cite{oldRandomAccess,bb:thesis}.
Instead, we take on the challenge of \e{deterministic} approximation.
Our final contribution is that:
\begin{enumerate}
\setcounter{enumi}{3}
\item We show that with an adjustment of our pivoting framework, 
we can establish a deterministic approximation scheme in time
quadratic in $1/\epsilon$ and quasilinear in database size. 
\end{enumerate}
In contrast to the randomized case, we found the task of deterministic approximation challenging, and our algorithm is indeed quite involved.
Again, the main challenge is in the trimming phase.

The remainder of the paper is organized as follows: 
We give preliminary definitions in \cref{sec:preliminaries}. 
We describe the general pivoting framework in \cref{sec:quantile-alg}. 
In \cref{sec:pivot}, we describe the pivot-selection algorithm. 
The main results are in \cref{sec:exact,sec:approx} where we devise exact and approximate trimmings, respectively, and establish the corresponding tractability results. 
We conclude in \cref{sec:conclusions}.

\section{Preliminaries}\label{sec:preliminaries}

\subsection{Basic Notions}

\introparagraph{Sets}
We use 
$[r]$ to denote the set of integers $\{1, \ldots, r\}$.
A multiset $L$ is described by a 2-tuple $(Z, \beta)$, where
$Z$ is the set of its distinct elements 
and $\beta : Z \rightarrow \N$ is a multiplicity function.
The set of all possible multisets with elements $Z$ is denoted by $\N^{Z}$.

\introparagraph{Relational databases}
A \e{schema} $\calS$ is a set of relational symbols $\{ R_1, \ldots, R_m \}$.
A \e{database} $D$ 
contains a finite relation $R^D \subseteq \dom^{a_R}$ for each $R \in \calS$, 
where $\dom$ is a set of constants called the \e{domain},
and $a_R$ is the arity of symbol $R$.
If $D$ is clear, we simply use $R$ instead of $R^D$.
The size of $D$ is the total number of tuples, denoted by $n$.

\introparagraph{Join Queries}
A \e{Join Query} (\JQ) $Q$ over schema $\calS$ 
is an expression of the form
$R_1(\mathbf{X}_1), \ldots, R_\ell(\mathbf{X}_\ell)$,
where $\{ R_1, \ldots, R_\ell \} \subseteq \calS$
and the variables of $Q$ are
$\var(Q) = \cup_{i \in [\ell]} \mathbf{X}_i$,
sometimes interpreted as a tuple instead of a set.
Each $R_i(\mathbf{X}_i), i \in [\ell]$ is called an \e{atom} of $Q$.
A repeated occurrence of a relational symbol is a \e{self-join} and a
\JQ without self-joins is \e{self-join-free}.
A \e{query answer} is a homomorphism 
from $Q$ to the database $D$, i.e.
a mapping from $\var(Q)$ to $\dom$ constants,
such that every atom $R_i(\mathbf{X}_i), i \in [\ell]$ maps to a tuple of $R_i^D$.
The set of query answers to $Q$ over $D$ is denoted by $Q(D)$ and we often represent a query answer $\qanswer \in Q(D)$ 
as a tuple of values assigned to $\var(Q)$.
For an atom $R_i(\mathbf{X}_i)$ of a \JQ and database $D$, 
we say that tuple $t \in R_i^D$ assigns value $a$ to variable $x$, and write it as $t[x] = a$,
if for every index $j$ such that $\mathbf{X}_i[j] = x$ we have $t[j] = a$.
For a predicate $P(\mathbf{X}_P)$ over variables $\mathbf{X}_P \subseteq \var(Q)$,
we denote by $(Q \wedge P)(D)$ the subset of query answers $Q(D)$ that satisfy $P(\mathbf{X}_P)$.

\introparagraph{Hypergraphs}
A {\em hypergraph} $\calH=(V,E)$ is a set $V$ of {\em vertices} and a set $E \subseteq 2^V$ of {\em hyperedges}.
A {\em path} in $\calH$ is a sequence of vertices such that every two consecutive vertices appear together in a hyperedge.
A {\em chordless path} is a path in which no two non-consecutive ones appear in the same hyperedge 
(in particular, no vertex appears twice).
The \e{number of maximal hyperedges}
is
$\mh(\calH ) = | \{ e \in E \mid \nexists e' \in E: e \subset e' \}|$.
A set of vertices $U \subseteq V$ 
is
\e{independent} if no pair appears in a hyperedge, i.e.,
$|U \cap e| \leq 1, \forall e \in E$.

\introparagraph{Join trees}
A \emph{join tree} of a hypergraph $\calH=(V,E)$ is a tree $T$ where its nodes\footnote{
To avoid confusion, we use the terms hypergraph \e{vertices} and tree \e{nodes}.}
are the hyperedges of $\calH$
and the {\em running intersection} property holds, namely: 
for all $u \in V$ the set $\{e \in E \mid u \in e\}$ forms a (connected) subtree in $T$.
We associate a hypergraph $\calH(Q) = (V, E)$ to a \JQ $Q$ 
where the vertices are the variables of $Q$, 
and every atom of $Q$ corresponds to a hyperedge with the same set of variables.
With a slight abuse of notation, we identify atoms of $Q$ with hyperedges of $\calH(Q)$.
A \JQ $Q$ is {\em acyclic} if there exists a join tree for $\calH(Q)$,
otherwise it is \e{cyclic}.
If we root the join tree, the subtree rooted at a node $U$ defines a subquery,
i.e., a \JQ that contains only the
atoms of descendants of $U$.
A partial query answer (for the subtree) rooted at $U$
is an answer to the subquery.
If we materialize a relation $R_U$ for node $U$,
a partial query answer (for the subtree) rooted at $t \in R_U$
must additionally agree with $t$.

\introparagraph{Complexity}
We measure complexity in the database size $n$,
while query size is considered constant.
The model of computation is the standard RAM model with uniform cost measure.
In particular, it allows for linear-time construction of lookup tables,
which can be accessed in 
constant time.
Following our prior work~\cite{carmeli23direct}, we only consider comparison-based sorting,
which takes quasilinear time.

\subsection{Orders over Query Answers}
To define \%\JQs,
we assume an ordering of the query answers by a given ranking function.
The ranking function is described by a 2-tuple $(\w, \preceq)$
where a weight function $\w : Q(D) \rightarrow \Wdom$ maps the answers
to a weight domain $\Wdom$ 
equipped with a total order $\preceq$.
We denote the strict version of the total order by $\prec$.
Assuming consistent tie-breaking,
the total order extends to query answers,
i.e., for $q_1, q_2 \in Q(D)$, $q_1 \preceq q_2$ iff $\w(q_1) \prec \w(q_2)$
or $\w(q_1) = \w(q_2)$ and $q_1$ is (arbitrarily but consistently) chosen to break the tie.

\introparagraph{Weight aggregation model}
We focus on the case of \emph{aggregate ranking functions} where 
the query answer weights are computed by aggregating
weights are assigned to
the input database.
In particular, an input-weight function $\w_x: \dom \rightarrow \Wdom$ 
associates each value of variable $x$ with a weight in $\Wdom$.
An aggregate function $\Waggr: \N^{\Wdom} \rightarrow \Wdom$ 
takes a multiset of weights and produces a single weight.
Aggregate ranking functions are typically not sensitive to the order in which the input weights are given~\cite{gray97cube,jesus15aggregation},
captured by the fact that their input is a multiset.
Query answers map to $\Wdom$ by aggregating the weights of values assigned to a subset of the input variables $\Wvars \subseteq \var(Q)$
with an aggregate function $\Waggr$.
Thus, the weight of a query answer $\qanswer \in Q(D)$ is
$\w(\qanswer) = \Waggr(\{\w_x(\qanswer[x]) \mid x \in \Wvars \})$.
When we do not have a specific assignment from variables to values,
we use $\w(\Wvars)$ to refer to the expression $\Waggr(\{\w_x(x) \mid x \in \Wvars \})$.
For example, if $\var(Q) = \{x_1, x_2, x_3\}, \Wvars = \{x_1, x_3\}, \w_x(x)$ is the identity function 
for all varaibles $x$, and $\Waggr$ is summation,
then $\w(\Wvars) = x_1 + x_3$.

\introparagraph{Concrete ranking functions}
In this paper, we discuss three types of ranking functions:

\begin{enumerate}
	\item \textbf{\ranksum}: 
        $\Wdom$ is $\R$ and $\Waggr$ is summation. We use the term \e{full \ranksum} when $\Wvars=\var(Q)$ and \e{partial \ranksum} otherwise.
 
	\item \textbf{\rankmin / \rankmax}:
         $\Wdom$ is $\R$ and $\Waggr$ is $\min$ or $\max$.

    \item \textbf{\ranklex}: Lexicographic orders fit into our framework by letting the domain $\Wdom$ consist of tuples in
    $\N^{|\Wvars|}$. 
    Every variable $x \in \Wvars$ is mapped to $\Wdom$ as a tuple
    $(0, \ldots, w_x'(x), \ldots, 0)$ where 
    $w_x'(x)$ occupies
    the position of $x$ in the lexicographic order
    and $w_x'$ 
    is a function $w_x' : \dom \rightarrow \N$ that orders the domain of $x$ by mapping it to natural numbers.
    The aggregate function $\Waggr$ is then element-wise addition, while the
    order $\preceq$ compares these tuples lexicographically.
\end{enumerate}

\introparagraph{Problem definition}
Let $Q$ be a \JQ and $(\w, \preceq)$ a ranking function.
Given a database $D$, a query answer $q \in Q(D)$ is a $\quant$-quantile~\cite{rice2007msd} of $Q(D)$ for some $\quant \in [0,1]$
if there exists a valid ordering of $Q(D)$
where there are
$\lceil \quant |Q(D)| \rceil$ answers less-than or equal-to
$q$ and $\lfloor (1 - \quant) |Q(D)| \rfloor$
answers greater than $q$.
A \%\JQ asks for a $\quant$-quantile given $D$ and $\quant$.
Similarly, an $\epsilon$-approximate \%\JQ asks
for a $(\quant \pm \epsilon)$-quantile for a given $D$, $\quant$,
and $\epsilon \in (0, 1)$.

\introparagraph{Monotonicity}
Let $\uplus$ be multiset union.
An (aggregate) ranking function is \e{subset-monotone}~\cite{tziavelis22anyk} if
$\Waggr(L_1) \preceq \Waggr(L_2)$ implies that $\Waggr(L \uplus L_1) \preceq \Waggr(L \uplus L_2)$
for all multisets $L, L_1, L_2$.
All ranking functions we consider in this work have this property.
We note that subset-monotonicity has been used as an assumption in ranked enumeration~\cite{tziavelis22anyk,KimelfeldS2006} and
is a stronger requirement than the more well-known monotonicity 
notion of Fagin et al.~\cite{fagin03}.

\introparagraph{Tuple weights}
Our ranking function definition uses 
attribute-weights
but some of our algorithms
are easier to describe when dealing with tuple weights.
We can convert the former to the latter in linear time.
First, we eliminate self-joins by materializing a fresh relation for every repeated symbol in the query $Q$.
Second, to avoid giving the weight of a variable to tuples of multiple relations,
we define a mapping $\mu$ that assigns each variable $x \in \Wvars$ 
to a relation $R$ such that $x$ occurs in the $R$-atom of $Q$.
The weight of a tuple $t \in R$ is then the multiset of weights for variables assigned to $R$:
$\w_R(t) = \{\w_x(t[x]) \mid x \in \Wvars, \mu(x) = R\}$.
\footnote{The reason that 
we maintain the attribute weights as a set instead of aggregating them is that the aggregate 
ranking function can be holistic~\cite{gray97cube}, in which case we lose the ability to further aggregate. 
If, on the other hand, the ranking function is distributive~\cite{gray97cube} like \ranksum,
then we can aggregate to obtain a single weight for a tuple.
}
The total order $\preceq$ can be extended to sets of tuples $(t_1, \ldots, t_r)$
(and thus query answers)
by aggregating all individual weights contained in the tuple weights.

\subsection{Known Bounds}
\label{sec:known_bounds}

Certain upper and (conditional) lower bounds for  
\%JQ follow from our previous work~\cite{carmeli23direct}
on the \e{selection problem}
which asks for the query answer at index $k$.
The two problems are equivalent for acyclic \JQs, since an index 
can be translated into a fraction $\quant$,
and vice-versa,
by knowing $|Q(D)|$,
which can be computed in linear time as we explain in \Cref{sec:message_passing}.

The lower bounds are based on two hypotheses:

\begin{enumerate}
\item
\hyperclique~\cite{abboud14conjectures,DBLP:conf/soda/LincolnWW18}:
Let a \e{$(k{+}1,k)$-hyperclique} be a set of $k{+}1$ vertices
such that every subset of $k$ elements is a hyperedge.
For every $k \geq 2$, there is no
$O(m \polylog m)$ algorithm for deciding the existence of a
$(k{+}1,k)$-hyperclique in a $k$-uniform hypergraph with $m$ hyperedges.

\item
\ThreeSUM~\cite{patrascu2010dynamic,ilya053sum}:
For any $\epsilon > 0$,
we cannot decide in time $O(m^{2-\epsilon})$ 
whether there exist $a \in A, b \in B, c \in C$
from three integer sets $A, B, C$, each of size $\Omega(m)$, such that 
$a + b + c = 0$.

\end{enumerate}
\hyperclique{} implies that
we cannot decide
in $\bigO(n \polylog n)$ if a cyclic, self-join-free \JQ has any answer~\cite{bb:thesis}.
For \ranklex, an acyclic \%\JQ can be answered in $\bigO(n)$~\cite{carmeli23direct}.
For full \ranksum, an acyclic \%\JQ can be answered in $\bigO(n \log n)$ if its maximal hyperedges are at most 2,
and the converse is true if it is also self-join-free,
assuming \ThreeSUM~\cite{carmeli23direct}.

\subsection{Message Passing}
\label{sec:message_passing}

Message passing is a common algorithmic pattern that many algorithms for acyclic \JQs follow.
For example, it allows us to count the number of answers to an acyclic \JQ in linear time~\cite{khamis16faq,pichler13counting}.
Some of the algorithms that we develop also follow this pattern, that we abstractly describe below.

\introparagraph{Preprocessing}
Choose an arbitrary root for a join tree $T$ of the \JQ,
and materialize a distinct relation for every $T$-node.
For every parent node $V_p$ and child node $V_c$, group the $V_c$-relation
by the $V_p \cap V_c$ variables.
We will refer to these groups of tuples as \e{join groups}; a join group shares the same 
values for variables that appear in the parent node.
The algorithm visits the relations in a bottom-up order of $T$,
sending children-to-parent messages.
The goal is to compute a value $\mval(t)$ for each tuple $t$ of these relations,
initialized according to the specific algorithm.
Sometimes, it is convenient to add
an artificial root node $V_0 = \emptyset$ to the join tree,
which refers to a zero-arity relation with a single tuple $t_0 = ()$.
Tuple $t_0$ joins with all tuples of the previous root
and its purpose is only to gather the final result at the end of the bottom-up pass.

\introparagraph{Messages}
As we traverse the relations in bottom-up order, every tuple $t$ emits its $\mval(t)$.
These messages are aggregated as follows:

\begin{enumerate}
    \item Messages emitted by tuples $t'$ in a join group are aggregated with an operator $\oplus$.
    The result is sent to all parent-relation tuples that agree with the join values of the group. 
    
    \item A tuple $t$ computes $\mval(t)$ by aggregating the messages received from all children in the join tree, together with the initial value of $\mval(t)$, with an operator $\otimes$.
    
\end{enumerate}

\begin{figure}[t]
\centering
\begin{subfigure}{.45\linewidth}
    \centering
    \includegraphics[scale=0.48]{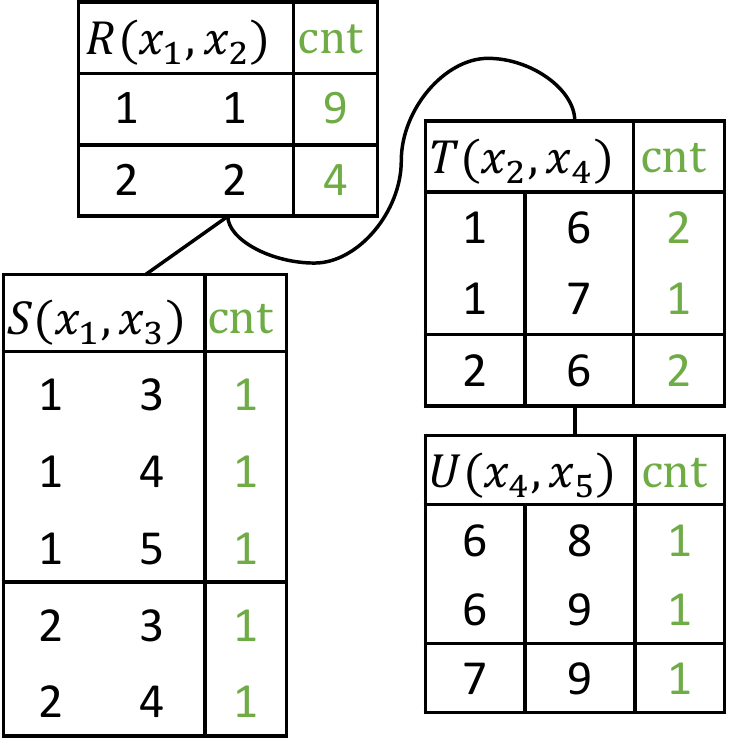}
    \caption{Example database and final counts of subtree answers.}
    \label{fig:tables}
\end{subfigure}%
\hfill
\begin{subfigure}{.53\linewidth}
    \centering
    \includegraphics[scale=0.49]{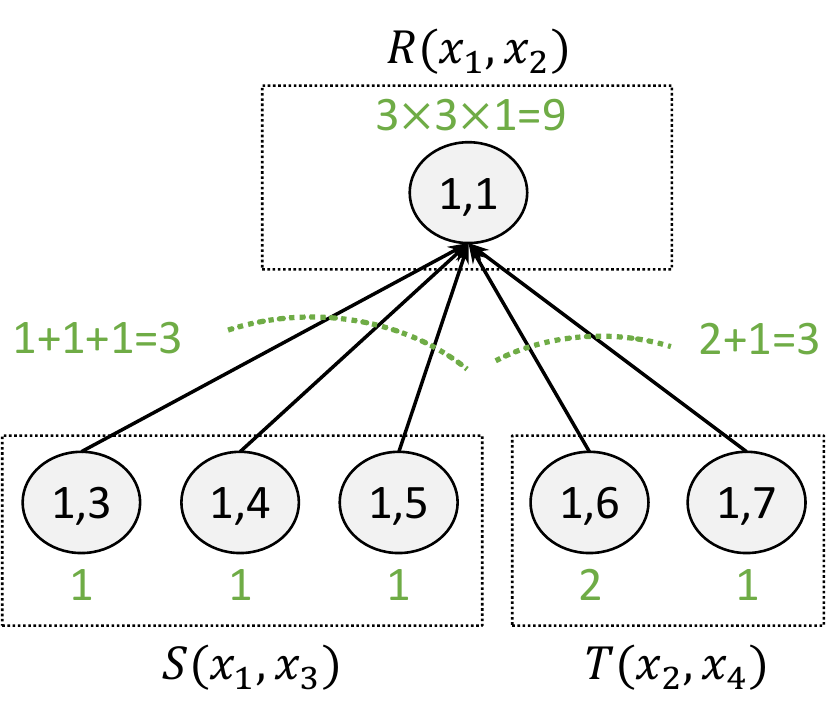}
    \caption{Messages received and aggregated by an $R$-tuple.}
    \label{fig:pivot-computatuon}
\end{subfigure}
\caption{Message passing for counting the answers to the \JQ
$R(x_1,x_2), S(x_1, x_3), T(x_2,x_4),U(x_4, x_5)$.}
\label{fig:count}
\end{figure}

\begin{example}[Count]
To count the \JQ answers,
we initialize $\cnt(t)=1$ for all tuples $t$,
set $\oplus$ to product $(\prod)$,
and $\otimes$ to sum $(\sum)$.
\Cref{fig:count} illustrates how messages are aggregated so that $\cnt(t)$ is 
the number of partial answers for the subtree rooted at $t$.
To get the final count, we sum the counts in the root relation ($9+4=13$ in the example),
e.g., by
introducing the artificial root-node tuple $t_0$.
\end{example}

\section{Divide-and-Conquer Framework}\label{sec:quantile-alg}

We describe a general divide-and-conquer framework for acyclic \%\JQs that applies to different ranking functions.
It follows roughly the same structure as linear-time
selection~\cite{blum73select} in a given array of elements.
This classic algorithm searches for the element at a desired index $k$
in the array by ``pivoting''.
In every iteration, it selects a pivot element and creates three array partitions: 
elements that are lower, equal, and higher than the pivot.
Depending on the partition sizes and the value of $k$,
it chooses one partition and continues with that, 
thereby reducing the number of candidate elements.
We adapt the high-level steps of this algorithm to our setting.
The crucial challenge is that 
\emph{we do not have access to the materialized array} of query answers
(which can be very large),
but only to the input database and \JQ that produce them.
In the following, we discuss the general structure of the algorithm
and the subroutines that are required for it to work.
In later sections, we then concretely specify these subroutines.

\introparagraph{Pivot selection}
We define what constitutes a ``good'' pivot.
Intuitively, it is an element whose position is roughly in the middle of the 
ordering. 
With such a pivot, the partitioning step is guaranteed to
eliminate a 
significant number 
of elements,
resulting in quick convergence.
Ideally, we would want to have the true median as our pivot because it 
is guaranteed to eliminate the largest fraction ($\frac{1}{2}$) of elements.
However, to achieve convergence in a logarithmic number of iterations,
it is sufficient to choose any pivot that eliminates any
constant fraction $c > 0$ of elements.

\begin{definition}[$c$-pivot]\label{def:c-pivot}
For a constant $c \in (0, 1)$ 
and a set $Z$ equipped with a total order $\preceq$, 
a $c$-pivot $p$ for $Z$ is an element of $Z$ such that
$|\{ z \in Z \mid p \preceq z \}| \geq c|Z|$ and
$|\{ z \in Z \mid p \succeq z \}| \geq c|Z|$.
\end{definition}

Our goal is to find such a $c$-pivot for the set of query answers $Q(D)$ ordered by the given ranking function.

\introparagraph{Partitioning}
Assuming an appropriate query answer $p$ as our pivot, we use it to partition
the query answers.
This means that we want to separate the answers into those
whose weight is
less than, equal to, and greater than the weight of the pivot.
Since we do not have access to the query answers, this partitioning step must be performed on the input database and \JQ.
The less-than and greater-than partitions can be described by the original \JQ, together with
inequality predicates:
(1) $\w(\Wvars) \prec \w(p)$ and
(2) $\w(\Wvars) \succ \w(p)$ respectively.
The equal-to partition can be assumed to contain all answers that do not fall into either of the other two.

\introparagraph{Trimming inequalities}
If we materialize as
database relations the inequalities that arise from the partitioning step, their size can be very large.
For example, the inequality $x_1 + x_2 + x_3 < 0$
for three variables $x_1, x_2, x_3$
has a listing representation of size $\bigO(n^3)$.
However, in certain cases it is possible to represent them
more efficiently, e.g., in space $\bigO(n \polylog n)$,
by modifying the original \JQ and database.
We call this process ``\e{trimming}.''

\begin{definition}[Predicate Trimming]
\label{def:exact_trimming}
Given a \JQ $Q$ and a predicate $P(U)$ with variables $U \subseteq \var(Q)$,
a \emph{trimming} of $P(U)$ from $Q$
receives a database $D$ and
returns a \JQ $Q'$ of size $\bigO(|Q|)$
and with $\var(Q) \subseteq \var(Q')$,
and a database $D'$
for which there exists an $\bigO(1)$-computable bijection from $Q'(D')$
to $(Q \wedge P)(D)$.
Trimming time is the time required to construct $Q'$ and $D'$.
\end{definition}

Efficient trimmings of predicates are for instance known for additive inequalities when the sum variables are found in adjacent \JQ atoms~\cite{tziavelis21inequalities}
and for not-all-equals predicates~\cite{khamis19negation}, which are a generalization of non-equality ($\neq$).
Ultimately, our ability to partition and the success of our approach relies on the existence of efficient trimmings
of inequalities that involve the aggregate function.

\introparagraph{Choosing a partition}
After we obtain three new \JQs and corresponding databases
by trimming,
we count their query answers 
to determine where the desired index
(calculated from the given percentage)
falls into.
To ensure that this can be done in linear time,
we want all \JQs to be acyclic,
and so we restrict ourselves to trimmings that do not alter the acyclicity of the \JQs.
To keep track of the candidate query answers, 
we maintain two weights $\texttt{low}$ and $\texttt{high}$ as bounds,
which define a contiguous region in the sorted array of query answers.
Every iteration then applies trimming for two additional inequalities
$\w(\Wvars) \succ \texttt{low}$ and $\w(\Wvars) \prec \texttt{high}$
in order to restrict the search to the current candidate set.

\introparagraph{Termination}
The algorithm terminates when the desired index falls into the equal partition since any of its answers, including our pivot,
is a $\quant$-quantile.\footnote{
If we want to enforce the same tie-breaking scheme 
across different calls to our algorithm,
we could continue searching within the equal partition with a \ranklex order, but this requires also trimming for equality-type predicates.
}
It also terminates when the number of candidate answers
is sufficiently small,
by calling
the Yannakakis algorithm~\cite{Yannakakis} to materialize them
and
then applying
linear-time selection~\cite{blum73select}.
With $c$-pivots, we eliminate at least $c |Q(D)|$
answers in every iteration;
hence, the candidate query answers 
will be $\bigO(n)$ after a logarithmic number of iterations.
Notice that our trimming definition allows the new database
to be larger than the starting one,
so the database size may increase across iterations.
However, the number of \JQ answers decreases, ensuring termination.

\introparagraph{Summary}
Our algorithm repeats the above steps
(selecting pivot, partitioning, trimming)
iteratively.
It requires the implementation
of two subroutines:
(1) selecting a $c$-pivot among the \JQ answers, which we call ``\PIVOT'',
and (2) trimming of inequalities,
which we call ``\TRIM''.
The pseudocode is in \Cref{sec:details_quantile}.

\begin{lemma}[Exact Quantiles]
\label{lem:quantile_exact}
Let $\mathcal{Q}$ be a class of acyclic \JQs and $(\w, \preceq)$ a ranking function.
If for all $Q' \in \mathcal{Q}$ 
\begin{enumerate}
    \item there exists a constant $c$ such that for any database $D$,
    a $c$-pivot for $Q'(D)$ can be computed in time $\bigO(g_p(n))$
    for some function $g_p$, and
    \item for all $\wval \in \Wdom$, there exist trimmings of $\w(\Wvars) \prec \wval$ and $\w(\Wvars) \succ \wval$ from $Q'$ that return $Q'' \in \mathcal{Q}$ in time $\bigO(g_t(n))$ for some function $g_t$,
\end{enumerate}
then a \%\JQ can be answered for all $Q \in \mathcal{Q}$ in time
$\bigO(\max \{ g_p(n), g_t(n) \} \log n)$.
\end{lemma}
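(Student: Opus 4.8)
The plan is to instantiate the divide-and-conquer framework above and check that the two black-box subroutines, together with linear-time counting (\Cref{sec:message_passing}) and Yannakakis's algorithm, combine to give the stated bound. Concretely, I would keep throughout the execution two weight bounds $\texttt{low},\texttt{high}\in\Wdom\cup\{-\infty,+\infty\}$ and an index $k$, maintaining the invariant that the sought answer is the $k$-th answer in the $\preceq$-order of the candidate set $C=\{q\in Q(D)\mid \texttt{low}\prec\w(q)\prec\texttt{high}\}$. Initially $\texttt{low}=-\infty$, $\texttt{high}=+\infty$, and $k=\lceil\quant\,|Q(D)|\rceil$, with $|Q(D)|$ obtained in $\bigO(n)$ time by message passing. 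In each iteration I build $Q',D'$ as a trimming of $\texttt{low}\prec\w(\Wvars)\prec\texttt{high}$ applied to the \emph{original} $Q,D$ — composing the $\prec$- and $\succ$-trimmings granted by hypothesis~(2), which keeps us inside $\mathcal{Q}$ and preserves acyclicity — so that by \Cref{def:exact_trimming} there is an $\bigO(1)$-computable bijection $\beta$ between $Q'(D')$ and $C$, and $|Q'(D')|=|C|$ is computable in $\bigO(|D'|)$ time. If $|Q'(D')|$ is below a fixed constant, I materialize $Q'(D')$ with Yannakakis, apply linear-time selection~\cite{blum73select}, and map back through $\beta$. Otherwise I invoke \PIVOT\ on $Q',D'$ to get a $c$-pivot $p\in Q'(D')$, trim $\w(\Wvars)\prec\w(p)$ and $\w(\Wvars)\succ\w(p)$ from $Q'$ (again staying in $\mathcal{Q}$) to obtain the ``$<$'' and ``$>$'' parts, and count their answers $n_<$ and $n_>$; the ``$=$'' part has size $n_==|Q'(D')|-n_<-n_>$.

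Correctness follows by a short induction on iterations. The first $n_<$ answers of $C$ in $\preceq$-order are exactly those with weight $\prec\w(p)$, the next $n_=$ are those of weight equal to $\w(p)$, and the last $n_>$ are those with weight $\succ\w(p)$. Hence: if $n_<<k\le n_<+n_=$, then by the definition of a $\quant$-quantile any element of the equal-weight block — in particular $\beta(p)$ — is a correct answer and I return it; if $k\le n_<$, I set $\texttt{high}\leftarrow\w(p)$, replacing $C$ by its first $n_<$ answers while keeping $k$ valid; if $k>n_<+n_=$, I set $\texttt{low}\leftarrow\w(p)$ and $k\leftarrow k-n_<-n_=$, replacing $C$ by its last $n_>$ answers and shifting the index. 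In every case the invariant is preserved, and the two non-terminating branches leave $C$ with at most $\max\{n_<,n_>\}$ elements.

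For the running time, the crucial point is that $p$ is a $c$-pivot, so at least $c|Q'(D')|$ answers are $\succeq p$ and at least $c|Q'(D')|$ are $\preceq p$; therefore $n_<\le(1-c)|Q'(D')|$ and symmetrically $n_>\le(1-c)|Q'(D')|$, i.e. $|C|$ shrinks by a factor of at least $1-c$ per iteration. Since $|Q(D)|=\bigO(n^{|Q|})$ with $|Q|$ constant, after $\bigO(\log n)$ iterations $|C|$ falls below the constant threshold and the algorithm halts. Each iteration performs $\bigO(1)$ trimmings (all from the original $Q,D$ of size $n$, hence $\bigO(g_t(n))$ in total), one \PIVOT\ call ($\bigO(g_p(n))$), and $\bigO(1)$ linear-time counting passes over instances of size $\bigO(g_t(n))$; the terminating iteration additionally runs Yannakakis and selection on an instance of size $\bigO(g_t(n))$ with $\bigO(1)$ output, which stays within budget. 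So each iteration costs $\bigO(\max\{g_p(n),g_t(n)\})$ and the total is $\bigO(\max\{g_p(n),g_t(n)\}\log n)$.

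The main obstacle I anticipate is not any single computation but the bookkeeping that makes the recursion sound: trimming afresh from the original database each iteration (rather than from the previous iteration's already-trimmed database) so that per-iteration cost stays $\bigO(g_t(n))$ in the \emph{original} $n$ and the bijection back to $Q(D)$ remains $\bigO(1)$-computable; and verifying that the $\prec$/$\succ$ trimmings for the running bounds and for the pivot split can be composed while staying inside $\mathcal{Q}$ and acyclic, which is what licenses applying counting, \PIVOT, and Yannakakis at every step. One should also be mildly careful that the pivot split is by \emph{weight} while $C$ and the quantile index use the tie-broken order $\preceq$; this is harmless because the quantile definition is met by any member of the equal-weight block.
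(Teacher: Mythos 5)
Your proposal is correct and follows essentially the same route as the paper: the paper proves this lemma as the $\epsilon=0$ special case of its approximate-quantile lemma, using exactly the algorithm you describe (re-trimming from the original $Q,D$ each iteration with the running $\texttt{low}/\texttt{high}$ bounds composed with the pivot inequalities, linear-time counting of the acyclic partitions, and the $c$-pivot guaranteeing a $(1-c)$ shrinkage and hence $\bigO(\log n)$ iterations). The only cosmetic difference is the stopping threshold (the paper halts once at most $|D|$ answers remain and then materializes and sorts, rather than waiting for a constant), which does not affect the bound.
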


Notice that trimming can result in a \e{different} query
than the one we start with.
This is why pivot-selection and trimming need to applicable not just to the input query $Q$,
but to all queries that we may obtain from trimming (referred to as a class in \Cref{lem:quantile_exact}).

\begin{example}
Suppose that $Q$ is $R_1(x_1, x_2), R_2(x_2, x_3)$ over a database $D$
and we want to compute the median by \ranksum with attribute weights equal to their values.
First, we call \PIVOT to obtain a 
pivot answer $p$,
which we use to create two partitions:
one with $x_1 + x_2 + x_3 < \w(p)$,
and one with $x_1 + x_2 + x_3 > \w(p)$.
Second, we call \TRIM on these inequalities.
By a known construction~\cite{tziavelis21inequalities}, these inequalities
can be trimmed in $\bigO(n \log n)$.
This construction 
adds 
a new column and variable $v$ to both relations.
We now have two \JQs $Q_\texttt{lt}$ and
$Q_\texttt{gt}$,
over databases
$D_\texttt{lt}$,
$D_\texttt{gt}$.
For example, $Q_\texttt{lt}$ is
$R_{1_\texttt{lt}}(x_1, x_2, v), R_{2_\texttt{lt}}(v, x_2, x_3)$.
Suppose that $|Q(D)| = 1001$, hence the desired index is $k = 500$ (with zero-indexing).
If $|Q_\texttt{lt}(D_\texttt{lt})| = 400$
and $|Q_\texttt{gt}(D_\texttt{gt})| = 600$,
we can infer that the middle partition contains $1$ one answer with weight $\w(p)$.
Thus, we have to continue searching in the index range from $401$ to $1000$ with $k' = 500 - 400 - 1 = 99$.
To create less-than and greater-than partitions in the next iteration, we will 
start with the original $Q$ and $D$ and
apply inequalities
$\w(p) \leq x_1 + x_2 + x_3 < \w(p')$
and
$\w(p') < x_1 + x_2 + x_3 < \infty$
 with some new pivot $p'$.
\end{example}

In \Cref{sec:pivot}, we will show that an efficient algorithm for \PIVOT exists 
for \emph{any subset-monotone ranking function}.
For \TRIM, the situation is more tricky and no generic solution is known.
For each ranking function, we design a trimming algorithm tailored to it.
This is precisely where we encounter the known conditional hardness of \ranksum~\cite{carmeli23direct}.
For example, a quasilinear trimming for $Q(x_1, x_2, x_3) \datarule R_1(x_1), R_2(x_2), R_3(x_3)$ and $x_1 + x_2 + x_3 < 0$ would 
violate our \ThreeSUM{} hypothesis
(see \Cref{sec:known_bounds})
because it would allow us to
count the number of answers in the less-than and greater-than partitions.
In \Cref{sec:exact}, we will show that efficient trimmings exist for
\rankmin/\rankmax and \ranklex,
as well as (partial) \ranksum in certain cases.

\subsection{Adaptation for Approximate Quantiles}
Since 
\%\JQ can be intractable (under our efficiency yardstick) for some ranking functions such as \ranksum~\cite{carmeli23direct},
we aim for $\epsilon$-approximate quantiles.
We can obtain a \e{randomized} approximation 
by the standard technique of sampling answers uniformly and taking as estimate the $\quant$-quantile of the sample (e.g., as done by Doleschal et al.~\cite{DBLP:conf/icdt/DoleschalBKM21} for quantile queries in a different model). Concentration theorems such as Hoeffding Inequality imply that it suffices to collect $O(1/\epsilon^{2})$ samples and repeat the process
$O(\log(1/\delta))$ times (and select the median of the estimates) to get an $\epsilon$-approximation with probability $1-\delta$. So, it suffices to be able to efficiently sample uniformly a random answer of an acyclic \JQ; we can do so using linear-time algorithms for constructing a logarithmic-time random-access structure for the answers~\cite{oldRandomAccess,bb:thesis}.

We will show that with our pivoting approach, we can obtain a \e{deterministic} approximation, which we found to be much more challenging than the randomized one. 
Pivot selection remains the same as in the exact algorithm,
while for trimming  (which as we explained is the missing piece for \ranksum),
we introduce an approximate solution based on the notion of a \emph{lossy trimming}.
Intuitively, a lossy trimming does not retain all the \JQ answers
that satisfy a given predicate.
Such a trimming results in some valid query answers being lost in each iteration
and causes inaccuracy in the final index of the returned query answer.
However, if the number of lost query answers is bounded, then we can also bound
the error on the index.

\begin{definition}[Lossy Predicate Trimming]
Given a \JQ $Q$, a predicate $P(U)$ with variables $U \subseteq \var(Q)$, and
a constant $\epsilon \in [0, 1)$,
an \emph{$\epsilon$-lossy trimming} of $P(U)$ from $Q$
receives a database $D$ and
returns a \JQ $Q'$ of size $\bigO(|Q|)$
and with $\var(Q) \subseteq \var(Q')$,
and a database $D'$
for which 
there exists an $\bigO(1)$-computable injection from $Q'(D')$ to $(Q \wedge P)(D)$, 
and also
$|Q'(D')| \geq (1 - \epsilon) |(Q \wedge P)(D)|$.
Trimming time is the time required to construct $Q'$ and $D'$.
\end{definition}

The injection in the definition above implies that some query answers that satisfy the given predicate 
do not correspond to any answers in the new instance we construct,
but we also ask their ratio to be bounded by $\epsilon$.
For $\epsilon=0$, we obtain an exact predicate trimming (\Cref{def:exact_trimming})
as a special case.\footnote{
An \emph{imprecise} trimming, which retains \emph{more} \JQ answers
than it should, would also work for our quantile algorithm.}

\begin{lemma}[Approximate Quantiles]
\label{lem:quantile_approx}
Let $\mathcal{Q}$ be a class of acyclic \JQs and $(\w, \preceq)$ a ranking function.
If for all $Q' \in \mathcal{Q}$ 
\begin{enumerate}
    \item there exists a constant $c$ such that for any database $D$,
    a $c$-pivot for $Q'(D)$ can be computed in time $\bigO(g_p(n))$
    for some function $g_p$, and

    \item for all $\wval \in \Wdom$ and $\epsilon' > 0$, there exist $\epsilon'$-lossy trimmings 
    of $\w(\Wvars) \prec \wval$ and $\w(\Wvars) \succ \wval$
    from $Q'$ that return $Q'' \in \mathcal{Q}$ in time $\bigO(g_t(n, \epsilon'))$
    for some function $g_t$,
\end{enumerate}
then an $\epsilon$-approximate
\%\JQ can be answered for all $Q \in \mathcal{Q}$ in time
$\bigO \big(\max \bigl\{ g_p(n), g_t \big(n, \frac{\epsilon}{2 \lceil \ell \log_{1/(1-c)} n \rceil} \big) \bigr\} \log n \big)$, 
where $\ell$ is the number of atoms of $Q$.
\end{lemma}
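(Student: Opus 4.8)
The argument follows the divide-and-conquer framework of \Cref{sec:quantile-alg}, mirroring the proof of \Cref{lem:quantile_exact} almost verbatim; the only change is that exact trimmings are replaced by $\epsilon'$-lossy trimmings with the error budget $\epsilon' := \frac{\epsilon}{2\lceil\ell\log_{1/(1-c)}n\rceil}$ from the statement. We first compute $|Q(D)|$ by message passing (\Cref{sec:message_passing}), set the target index $k := \lceil\quant|Q(D)|\rceil$ and bounds $\texttt{low} := -\infty$, $\texttt{high} := +\infty$. Each iteration then: (i) $\epsilon'$-lossy-trims $\w(\Wvars)\succ\texttt{low}$ and $\w(\Wvars)\prec\texttt{high}$ from the original $Q,D$ to form the current \e{region instance}---which stays acyclic and in $\mathcal{Q}$ by condition~(2), so \PIVOT applies to it---and runs \PIVOT on it to obtain a $c$-pivot $p$; (ii) $\epsilon'$-lossy-trims $\w(\Wvars)\prec\w(p)$ and $\w(\Wvars)\succ\w(p)$ from the region instance to obtain the less-than and greater-than sub-instances; (iii) counts the answers of all these instances in linear time, determines which of the three splits (less-than, equal-to, or greater-than $p$) contains $k$, and either returns $p$ when $k$ falls in the equal split, recurses after replacing $\texttt{high}$ or $\texttt{low}$ by $\w(p)$ and adjusting $k$, or---once the region instance has $\bigO(n)$ answers---materializes it by the Yannakakis algorithm and finishes with linear-time selection.

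\textbf{Termination and running time.}
Because the region instance is $\epsilon'$-lossy it still retains a $(1-\bigO(\epsilon'))$-fraction of the true current region, so the $c$-pivot $p$ discards at least a $c(1-\bigO(\epsilon'))$-fraction of the remaining answers (the lossy trimming of the chosen split only discards more). Since $\epsilon' = \bigO(1/(\ell\log n))$, the cumulative shrinkage over $\lceil\ell\log_{1/(1-c)}n\rceil$ iterations is within a constant factor of $(1-c)^{\lceil\ell\log_{1/(1-c)}n\rceil}\le n^{-\ell}$, and since $|Q(D)|\le n^\ell$ the loop halts after $T\le\lceil\ell\log_{1/(1-c)}n\rceil$ iterations, just as in \Cref{lem:quantile_exact}. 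Each iteration performs $\bigO(1)$ trimmings, each on a database of size $\bigO(n\polylog n)$---we rederive from the original $D$, so sizes do not compound across iterations---at cost $\bigO(g_t(n,\epsilon'))$, runs \PIVOT once for $\bigO(g_p(n))$, counts in $\bigO(n\polylog n)$, and the final Yannakakis-plus-selection step is linear; multiplying by $\bigO(\log n)$ iterations gives the stated bound, with $\epsilon$ entering only through the second argument of $g_t$.

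\textbf{Error analysis (the main obstacle).}
What remains is to show the returned answer is a $(\quant\pm\epsilon)$-quantile, and this bookkeeping is the only nontrivial part. Each $\epsilon'$-lossy trimming comes with an injection into the exact predicate answers and loses at most an $\epsilon'$-fraction, so in one iteration the less-than/equal/greater-than counts the algorithm sees undershoot the true split sizes of the current true region $C$ by $\bigO(\epsilon'|C|)\le\bigO(\epsilon'|Q(D)|)$. I would maintain the invariant that the \e{intended global rank} of the sought answer, namely $(\text{true number of answers}\prec\texttt{low}) + k$, stays within the accumulated loss of $\lceil\quant|Q(D)|\rceil$: descending into the less-than split changes neither addend, while descending into the greater-than split or stopping in the equal split perturbs the invariant by at most that iteration's loss. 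Summing over the $T\le\lceil\ell\log_{1/(1-c)}n\rceil$ iterations and using $\epsilon' = \frac{\epsilon}{2\lceil\ell\log_{1/(1-c)}n\rceil}$, the final answer's true position differs from $\lceil\quant|Q(D)|\rceil$ by at most $\epsilon|Q(D)|$; the factor $2$ in $\epsilon'$ is what absorbs the constant number of lossy trimmings per iteration. The subtle points to verify are that these per-iteration losses \e{add} rather than compound (which is exactly why each region is rederived from the original $D$ rather than from the previous instance), and that when $k$ lands in the equal split and we output the pivot $p$, the slightly inaccurate equal-count still suffices---here one uses that all answers of weight $\w(p)$ form a contiguous block of positions into which $p$ may be placed by a valid tie-break, so it is enough that $\lceil\quant|Q(D)|\rceil$ lies within $\epsilon|Q(D)|$ of that block. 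Everything else---acyclicity preservation (by condition~(2)), linear-time counting via message passing, and the geometric decay of the answer count---carries over unchanged from \Cref{lem:quantile_exact}.
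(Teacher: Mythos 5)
Your proposal is correct and follows essentially the same route as the paper's proof: bound the number of iterations by the geometric decay guaranteed by the $c$-pivot (noting that lossy trimming only discards more), account for a per-iteration index drift of $\bigO(\epsilon'|Q(D)|)$ caused by answers "lost" into the equal-to partition, observe that these drifts add rather than compound because each region is re-trimmed from the original $Q,D$, and sum over the $\lceil \ell \log_{1/(1-c)} n\rceil$ iterations to justify the choice $\epsilon' = \frac{\epsilon}{2\lceil\ell\log_{1/(1-c)}n\rceil}$. The minor differences (ordering of the low/high versus pivot trims within an iteration, and your explicit handling of the tie-break block for the equal partition) do not change the argument.
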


In \Cref{sec:approx} we will give an $\epsilon$-lossy trimming for additive inequalities,
which, combined with the pivot algorithm of \Cref{sec:pivot}, will give us an $\epsilon$-approximate quantile algorithm for \ranksum.

\section{Generic Pivot Selection}
\label{sec:pivot}

We describe a \PIVOT algorithm for choosing a pivot element among the answers to an acyclic \JQ. 
This is one of the two main subroutines of our quantile algorithm.
We show that a $c$-pivot can be computed in linear time for a large class of ranking functions.

\begin{lemma}[Pivot Selection]
\label{lem:pivot}
Given an acyclic \JQ $Q$ over a database $D$ of size $n$ and a \rankf ranking function,
a $c$-pivot of $Q(D)$ together with $c \in (0, 1)$
can be computed in time $\bigO(n)$. 
\end{lemma}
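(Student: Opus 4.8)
The plan is to compute the pivot in a single bottom-up pass over a rooted join tree $T$ of $Q$, following the message-passing pattern of \Cref{sec:message_passing}. We add the artificial root node $V_0=\emptyset$, so that $Q(D)$ is exactly the set of partial answers rooted at the unique tuple $t_0$. For every node $V$ and every tuple $t$ in the (materialized) relation $R_V$ we compute two quantities: the count $\cnt_V(t)$ of partial answers rooted at $t$, and a partial answer $p_V(t)$ rooted at $t$ that is \emph{weight-balanced}, meaning that at least a $c_V$-fraction of the partial answers rooted at $t$ have weight $\succeq\w(p_V(t))$ and at least a $c_V$-fraction have weight $\preceq\w(p_V(t))$, where $c_V\in(0,1)$ depends only on $Q$. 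Returning $p_{V_0}(t_0)$ together with $c=c_{V_0}$ then answers the lemma: a weight-balanced answer is a $c$-pivot in the sense of \Cref{def:c-pivot}, and it is exactly what the framework of \Cref{sec:quantile-alg} consumes, since that framework only ever partitions answers by $\w(p)$. As $T$ has a constant number of nodes, it suffices to keep every $c_V$ bounded away from $0$.

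First I would set up the recursion. The count $\cnt_V(t)$ is obtained by the linear-time counting pass of \Cref{sec:message_passing}. For the balanced answer, a leaf has $\cnt_V(t)=1$ and $p_V(t)=t$, which is trivially $1$-balanced. For an internal node $V$ with children $c_1,\dots,c_d$ ($d$ a constant) and a fixed $t\in R_V$, acyclicity (running intersection) makes the partial answers rooted at $t$ a Cartesian product $\{t\}\times S_1\times\cdots\times S_d$, where $S_i=\biguplus_{t'\in g_i(t)}P_{c_i}(t')$, with $g_i(t)$ the join group of $R_{c_i}$ matching $t$ on the shared variables, so $|S_i|=\sum_{t'\in g_i(t)}\cnt_{c_i}(t')$ and $\cnt_V(t)=\prod_i|S_i|$. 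I build $p_V(t)$ in two steps. \emph{(a) Collapse each child slot:} among the candidates $\{p_{c_i}(t')\}_{t'\in g_i(t)}$, take a \emph{weighted median} under $\preceq$ with $t'$ given multiplicity $\cnt_{c_i}(t')$ --- an element with total multiplicity at least $|S_i|/2$ on each side --- and let $r_i$ be the chosen $p_{c_i}(t')$. This costs time linear in $|g_i(t)|$ (weighted selection), and since the weight of any partial answer is a multiset of at most $|\Wvars|$ values, i.e. of constant size, every $\Waggr$-comparison is $\bigO(1)$; crucially, one weighted median is computed per join group, not per parent tuple, so the total cost over a whole relation is $\bigO(n)$. \emph{(b) Combine slots:} set $p_V(t):=(t,r_1,\dots,r_d)$.

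The crux of the argument --- and the step I expect to be the main obstacle --- is showing that the balance constant $c_V$ does not degrade to zero. For step~(a): if $\w(p_{c_i}(t'))\succeq\w(r_i)$, then by the inductive balance of $p_{c_i}(t')$ and transitivity of $\preceq$, at least $c_{c_i}\cdot\cnt_{c_i}(t')$ answers of $P_{c_i}(t')$ have weight $\succeq\w(r_i)$; summing over the $t'$ on that side of the weighted median, whose total $\cnt_{c_i}$-mass is $\ge|S_i|/2$, shows that $r_i$ is $(c_{c_i}/2)$-balanced in $S_i$, and symmetrically for $\preceq$. For step~(b) the \rankf assumption enters: if $q_i\in S_i$ has $\w(q_i)\succeq\w(r_i)$ for all $i$, then $d$ successive applications of subset-monotonicity --- replacing the $r_i$'s by the $q_i$'s one at a time, with $\w_{R_V}(t)$ and the still-unchanged parts serving as the common multiset --- give $\w(t,q_1,\dots,q_d)\succeq\w(p_V(t))$. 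Hence the number of partial answers rooted at $t$ of weight $\succeq\w(p_V(t))$ is at least $\prod_i\#\{q_i\in S_i:\w(q_i)\succeq\w(r_i)\}\ge\bigl(\prod_i\tfrac{c_{c_i}}{2}\bigr)\cnt_V(t)$, and likewise for $\preceq$, so $c_V=\prod_i c_{c_i}/2$. Unrolling, $c_V=2^{-e_V}$ where $e_V$ is the number of edges of the subtree at $V$, hence $c=c_{V_0}=2^{-(|T|-1)}$, a positive constant since $|T|$ is at most the number of atoms of $Q$ plus one.

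Finally, $p:=p_{V_0}(t_0)$ is reconstructed as a full assignment to $\var(Q)$ by tracing the chosen tuples down $T$, which is $\bigO(1)$ work as $T$ has constant size, and $c$ is read off from $Q$; the overall running time is $\bigO(n)$, and notably no sorting is needed. Two routine points I would not dwell on: dangling tuples (those $t$ with $\cnt_V(t)=0$) are ignored, equivalently one first runs a linear-time semi-join reduction, and we may assume $Q(D)\neq\emptyset$ since emptiness is detected in the counting pass. One last remark: the construction guarantees balance \emph{by weight}, i.e.\ at least $c\,|Q(D)|$ answers on each side of $\w(p)$ inclusive, which is precisely what the divide-and-conquer framework uses; this meets \Cref{def:c-pivot} under consistent tie-breaking, and the degenerate case of a heavy weight-tie class needs no separate treatment here because the framework partitions only on $\w(p)$. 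I would spell out the tie-breaking bookkeeping (carrying a tie-break key alongside each weight into the weighted-median computation) but it does not affect the asymptotics.
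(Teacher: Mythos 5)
Your proposal is correct and follows essentially the same route as the paper: a bottom-up message-passing pass that computes, per join group, a weighted median of child pivots (losing a factor of $2$ in the balance constant, as in the paper's Lemma~\ref{lem:median_of_pivots}) and then combines the per-child pivots with the parent tuple using subset-monotonicity (yielding the product of the children's constants, as in Lemma~\ref{lem:union_of_pivots}), with the same $\bigO(n)$ accounting via one weighted-median computation per join group. The resulting constant $c=2^{-(|T|-1)}$, the Cartesian-product/running-intersection justification, and the tie-breaking remark all match the paper's argument.
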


\subsection{Algorithm}

The key idea of our algorithm
is the ``median-of-medians'', in similar spirit to classic linear-time selection~\cite{blum73select} or selection for the $X+Y$ problem~\cite{johnson78xy}.
The main difference is that we apply the median-of-medians idea
iteratively using message passing.
The detailed pseudocode is given in \Cref{sec:pivot_details}.

\introparagraph{Weighted median}
An important operation for our algorithm is the weighted median,
which selects the median of a set,
assuming that every element appears a number of times equal to an assigned weight.\footnote{This weight is not the same as the weight assigned by the ranking function, thus we simply call it multiplicity.}
More formally, for a totally-ordered ($\preceq$) set $Z$ and a multiplicity function $\beta : Z \rightarrow \N^+$,
the weighted median $\wm_\preceq(Z, \beta)$ is the element at position $\lfloor \frac{|B|}{2} \rfloor$
in the multiset $B = (Z, \beta)$ ordered by $\preceq$.
The weighted median can be computed in time linear in $|Z|$~\cite{johnson78xy}.

\introparagraph{Message passing}
Our algorithm employs the message-passing framework as outlined in \Cref{sec:message_passing}
to compute $\pivot(t)$ for each tuple $t$ bottom-up.
The computed $\pivot(t)$ is a partial query answer for the subtree rooted at $t$
and serves as a $c'$-pivot for these partial answers,
for some $c' \geq c$.
Messages are aggregated as follows:
(1) The $\oplus$ operator that combines pivots within a join group
is the weighted median.
The multiplicity function is given by the count of subtree answers
and the order by the ranking function.
The counts are also computed using message passing (see \Cref{sec:message_passing}).
(2) The $\otimes$ operator that combines pivots from different children
is simply the union of (partial) assignments to variables.

\begin{example}
Consider the binary-join $R_1(x_1,x_2), R_2(x_2,x_3)$ under full \ranksum.
Assume $R_1$ is the parent in the join tree with tuple weights $x_1 + x_2$,
while $R_2$ is the child with 
tuple weights $x_3$.
First, \PIVOT groups the $R_2$ tuples by $x_2$ and, for each group, 
it finds the median $x_3$ value. 
The message from $R_2$ to $R_1$ is a mapping from $x_2$ values to 
(1) the count of $R_2$ tuples that contain the $x_2$ value and 
(2) the median $x_3$ value over these tuples. 
Then, every tuple $r_1 \in R_1$ unions its $x_1, x_2$ values with the incoming $x_3$, 
obtaining a $\pivot(r_1) = (x1, x2, x3)$ tuple. 
To compute the final pivot, \PIVOT takes the median of these $\pivot(r_1)$ tuples, 
ranked by $x1+x2+x3$, and weighted by the count of $R_2$ tuples that contain the $x_2$ value.
\end{example}

\begin{example}
\Cref{fig:pivot} shows how an $R$-tuple computes its pivot under full \ranksum
for the example of \Cref{fig:count}.
Green values in brackets [.] represent the counts, 
while the orange assignments are the computed pivots for each tuple or join group.
From a leaf node like $S$ or $U$,
messages are simply the relation tuples, each with multiplicity 1.
To see how a pivot is computed within a join group, consider the $T$-node group.
The pivot of tuple $(1,6)$ is smaller than the pivot of tuple $(1,7)$
according to the ranking function
because $1 + 6+ 8 < 1 + 7 + 9$.
The weighted median selects the latter
because it has multiplicity 2 (that is the group size
for $x_4=6$ in the child $U$).
\end{example}

\begin{figure}[t]
\centering
\includegraphics[scale=0.5]{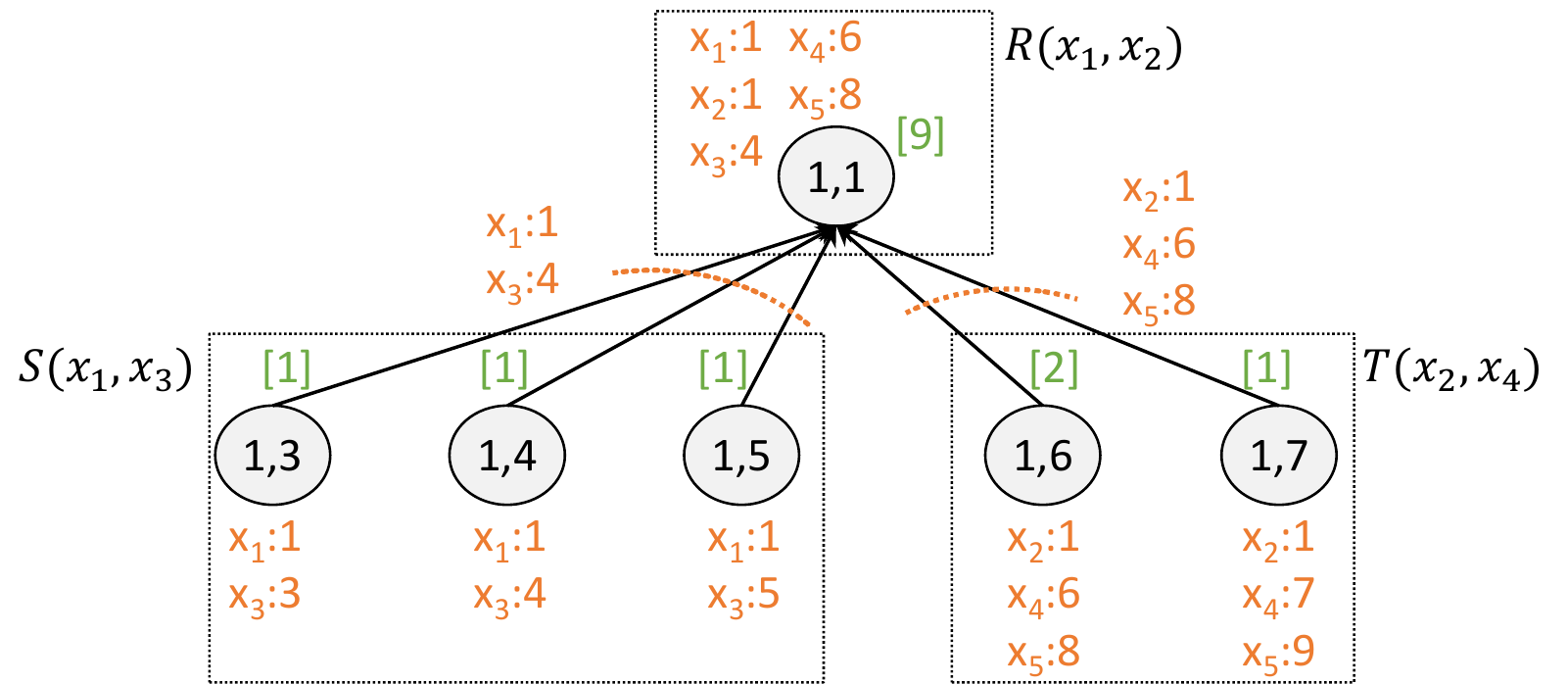}
\caption{Message passing for computing pivots
on the example \JQ and database of \Cref{fig:count}
under \ranksum with weights equal to attribute
values.}
\label{fig:pivot}
\end{figure}

\introparagraph{Pivot accuracy}
As we will show, applying our two operations (weighted median, union)
results in a loss of accuracy for the pivots, captured by the $c$ parameter.
The pivots computed for the leaf relations are the true medians (thus $\frac{1}{2}$-pivots),
but the $c$ parameter decreases as we go up the join tree.
Fortunately, this can be bounded by a function of the query size, making our final result a $c$-pivot 
with a $c$ value that is independent of the data size $n$.  
The algorithm keeps track of the $c$ value for every node
and upon termination, the $c$ value of the root is returned.

\introparagraph{Running time}
The time is linear in the database size.
The weighted median and count operations 
are only performed once for every join group
and both take linear time.
Each tuple is visited only once, and all operations per tuple
(e.g., number of child relations, finding the joining group, union)
depend only on the query size.

\subsection{Correctness}
\label{sec:pivot_correctness}

First, we show that \PIVOT returns a valid query answer.
The concern is that a variable $x$ may be assigned to
different values in the pivots that are unioned from different branches of the tree.
As we show next, this cannot happen because of the join tree properties.

\begin{lemma}
\label{lem:pivot_union_correctness}
Let $V$ be a join-tree node and $R_V$ the corresponding relation.
For all $t \in R_V$, the variable assignment $\pivot(t)$ computed by \PIVOT is a partial query answer for the subtree rooted at $V$.
\end{lemma}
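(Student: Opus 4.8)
The plan is to argue by structural induction on the join tree, bottom-up, mirroring the recursive definition of $\pivot(t)$. The base case is a leaf node $V$: here $\pivot(t) = t$ itself (each tuple with multiplicity $1$ is its own weighted median of a singleton group), which is trivially a partial query answer for the one-atom subquery rooted at $V$. For the inductive step, let $V$ be an internal node with children $V_1, \dots, V_k$ in the join tree, and fix $t \in R_V$. By construction, $\pivot(t)$ is obtained by the $\otimes$ operator, i.e., by taking the union of $t$'s own assignment with the partial assignments $\pivot_i$ received from each child $V_i$, where $\pivot_i$ is the weighted median (over the appropriate join group in $R_{V_i}$ agreeing with $t$ on $V \cap V_i$) of partial query answers for the subtree rooted at $V_i$; the latter are partial query answers by the induction hypothesis. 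So the only thing to verify is that this union is a well-defined mapping and is itself a partial query answer for the subtree rooted at $V$.

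The key step — and the one I expect to be the only real obstacle — is showing the union is \emph{consistent}: whenever a variable $x$ is assigned by two of the pieces being unioned (by $t$ and some $\pivot_i$, or by $\pivot_i$ and $\pivot_j$ for $i \neq j$), they assign it the same value. This is exactly where the running-intersection (connected-subtree) property of the join tree is needed. If $x$ appears in the subtrees rooted at both $V_i$ and $V_j$ (with $i \neq j$), then the set of tree nodes whose hyperedge contains $x$ is connected, so it must contain the node $V$ on the path between the two subtrees; hence $x \in V \cap V_i$ and $x \in V \cap V_j$. Now $\pivot_i$ was selected as a weighted median \emph{within the join group of $R_{V_i}$ that agrees with $t$ on $V \cap V_i$}, so every partial answer in that group — in particular $\pivot_i$ — assigns $x$ the value $t[x]$; the same argument gives $\pivot_j[x] = t[x]$, and likewise the $t$-vs-$\pivot_i$ case. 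Thus all agreeing pieces assign $x$ the value $t[x]$, and the union is a well-defined function on $\var$ of the subtree.

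It remains to check that this consistent union actually satisfies every atom of the subquery rooted at $V$. The atom of $V$ itself is satisfied because the union restricted to $V$'s variables equals $t \in R_V$. Every other atom belongs to the subtree rooted at some child $V_i$, and by the induction hypothesis $\pivot_i$ satisfies it; since $\pivot(t)$ extends $\pivot_i$ on that atom's variables (consistency guarantees no conflict), $\pivot(t)$ satisfies it too. Applying this at the (possibly artificial) root node yields that $\pivot$ at the root is a full query answer, completing the induction. No delicate estimates are involved here — the content is entirely the connectedness argument of the preceding paragraph — so I would present that part carefully and treat the rest as routine.
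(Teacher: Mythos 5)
Your proposal is correct and follows essentially the same route as the paper's proof: induction on the join tree, with the running-intersection property forcing any variable shared across branches to appear in the parent node $V$, and the fact that the weighted median is chosen from pivots of tuples in the join group agreeing with $t$ (combined with the inductive hypothesis that each such pivot agrees with its own tuple) yielding consistency of the union. Your write-up is if anything slightly more complete, since you also explicitly verify the $t$-versus-$\pivot_i$ conflict case and that every atom of the subquery is satisfied.
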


Next, we show how the accuracy of the pivot is affected by repeated
weighted median and union operations.

\begin{lemma}\label{lem:median_of_pivots}
Given $r$ disjoint sets $Z_1, \ldots, Z_r$ equipped with a total order $\preceq$
and corresponding $c$-pivots $p_1, \ldots, p_r$,
then $\wm(\{p_1, \ldots, p_r\}, \beta)$ with $\beta(p_i) = |Z_i|$, for all $i \in [r]$ is a 
$\frac{c}{2}$-pivot for $Z_1 \cup\cdots\cup Z_r$.
\end{lemma}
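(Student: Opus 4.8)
The plan is to show that the weighted median $p = \wm(\{p_1,\dots,p_r\},\beta)$ lies far enough from both ends of the combined order. Write $Z = Z_1 \cup \cdots \cup Z_r$ and $N = |Z| = \sum_i |Z_i| = \sum_i \beta(p_i)$. By symmetry it suffices to lower-bound $|\{z \in Z \mid p \preceq z\}|$ by $\tfrac{c}{2}N$; the argument for $|\{z \in Z \mid p \succeq z\}|$ is identical with the order reversed. First I would unpack what the weighted median guarantees: since $p$ sits at position $\lfloor N/2 \rfloor$ in the multiset $B = (\{p_1,\dots,p_r\},\beta)$, the total multiplicity of pivots $p_i$ with $p_i \succeq p$ is at least $\lceil N/2 \rceil \geq N/2$ (and likewise for $p_i \preceq p$). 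So let $I = \{i \mid p_i \succeq p\}$; then $\sum_{i \in I} |Z_i| \geq N/2$.

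The key step is to convert the count of \emph{pivots} above $p$ into a count of \emph{elements} above $p$, using the $c$-pivot property of each $p_i$. For every $i \in I$, since $p_i \succeq p$ and $p_i$ is a $c$-pivot for $Z_i$, we have $|\{z \in Z_i \mid p \preceq z\}| \geq |\{z \in Z_i \mid p_i \preceq z\}| \geq c|Z_i|$. Summing over $i \in I$ and using disjointness of the $Z_i$,
\[
|\{z \in Z \mid p \preceq z\}| \;\geq\; \sum_{i \in I} |\{z \in Z_i \mid p \preceq z\}| \;\geq\; c \sum_{i \in I} |Z_i| \;\geq\; \frac{c}{2} N .
\]
The symmetric computation with $J = \{i \mid p_i \preceq p\}$ gives $|\{z \in Z \mid p \succeq z\}| \geq \tfrac{c}{2}N$, so $p$ is a $\tfrac{c}{2}$-pivot for $Z$. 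One also needs $p \in Z$: this holds because $p$ is one of the $p_i$, each of which belongs to its $Z_i \subseteq Z$.

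The only real subtlety — and the step I would be most careful about — is the bookkeeping with floors and ceilings in the definition of the weighted median together with the definition of a $c$-pivot (\Cref{def:c-pivot}), to make sure the "at least half the multiplicity lies weakly above $p$, and at least half weakly below" claim is literally correct for the index $\lfloor N/2 \rfloor$, including degenerate cases such as a single $Z_i$ dominating more than half the mass. These edge cases do not change the bound (we only ever need $\geq N/2$, not a tight count), but they are where an off-by-one could hide, so I would state the weighted-median property as a small explicit sub-claim before invoking it.
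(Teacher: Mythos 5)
Your proof is correct and follows essentially the same route as the paper's: use the weighted-median position to show that the sets whose pivots lie (weakly) on one side of $p$ account for at least half of $|Z|$, then transfer the $c$-pivot guarantee of each such $p_i$ to $p$ and sum over the disjoint $Z_i$. The only difference is presentational (the paper sorts the pivots and argues about the prefix $Z_1,\dots,Z_m$, whereas you work directly with the index set $I=\{i \mid p_i \succeq p\}$), and your explicit flagging of the floor/ceiling bookkeeping is if anything slightly more careful than the paper's own write-up.
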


\begin{lemma}
\label{lem:union_of_pivots}
Assume a join-tree node $V$, its corresponding relation $R_V$,
its children $V_1, \ldots, V_r$,
a \rankf ranking function,
and $c$-pivots $p_i, i \in r$
for the partial answers which are
rooted at 
$V_i$ and restricted to those that agree with $t$.
Then, $t \cup p_1 \cup \ldots \cup p_r$ is a $c^r$-pivot for the partial answers rooted at $t$.
\end{lemma}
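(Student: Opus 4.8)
The plan is to induct on the number of children $r$ and reduce everything to a two-step analysis: first isolating the contribution of a single child, then combining. Fix the node $V$, its relation $R_V$, and a tuple $t \in R_V$. For a subset $J \subseteq [r]$ of the children, let $A_J$ denote the set of partial answers rooted at $t$ that are obtained by joining $t$ with independent partial answers at each $V_i$, $i \in J$ (leaving the other branches "open"); so $A_{[r]}$ is the full set of partial answers rooted at $t$ whose pivot quality we want to bound, and $A_\emptyset = \{t\}$. The key structural fact, which follows from the running-intersection property of the join tree exactly as in \Cref{lem:pivot_union_correctness}, is that $A_J$ is the Cartesian product (over $i \in J$) of the sets of $V_i$-rooted partial answers agreeing with $t$: branches share no variables beyond those already fixed by $t$, so the product is well-defined and its size is the product of the factor sizes. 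Call the $i$-th factor $B_i$, with $|B_i| = n_i$ and with $c$-pivot $p_i$ by hypothesis.

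The heart of the argument is a one-branch lemma: if $p$ is a $c$-pivot for a totally-ordered set $B$ (under a \rankf order), then for any fixed ``context'' multiset $L$ of weights, the element $t' \cup p$ is a $c$-pivot for the product $\{t'\} \times B$ equipped with the order induced by aggregating $L$ with the weights inside each element. This is precisely where \rankf-ness (subset-monotonicity) is used: $\w(q_1) \preceq \w(q_2)$ for $q_1, q_2 \in B$ implies $\Waggr(L \uplus \w(q_1)) \preceq \Waggr(L \uplus \w(q_2))$, so the order on $B$ is preserved when we glue on a common context, and the fraction of elements on each side of the pivot is unchanged. Iterating this branch by branch: starting from $B_1 \times \cdots \times B_r$, first pick the $p_1$-slice, which contains a $\geq c$-fraction of the whole (here the "context" is the union of the ranges of $B_2, \dots, B_r$ — but this doesn't quite work directly, because a $c$-pivot in $B_1$ does not immediately give a $c$-pivot of the product when the other coordinates vary). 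So instead I would argue coordinate by coordinate on the product order: restrict to the $p_r$-slice (loses at most a $(1-c)$ fraction on each side, i.e.\ keeps $\geq c|A_{[r]}|$ below-or-equal and $\geq c|A_{[r]}|$ above-or-equal), then within that slice restrict to the $p_{r-1}$-slice (another factor of $c$), and so on down to $p_1$; after $r$ restrictions we are left with the single element $t \cup p_1 \cup \cdots \cup p_r$, and the counts on each side have each been multiplied by at least $c$ exactly $r$ times, yielding a $c^r$-pivot.

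The main obstacle is making the "restrict one coordinate at a time'' step rigorous: when I fix coordinate $i$ to the value $p_i$, I need that among elements of $A_{[r]}$ that are $\preceq$ (resp.\ $\succeq$) the eventual target, at least a $c$-fraction survive. This requires viewing $A_{[r]}$ with the product/aggregate order and noting that, holding all coordinates except $i$ fixed, the induced order on the $i$-th coordinate agrees with the order on $B_i$ (this is again subset-monotonicity, plus the consistent tie-breaking assumption so that ties are resolved uniformly); hence within each such "column'' the $c$-pivot property of $p_i$ in $B_i$ transfers, and summing over columns gives the $c$-fraction bound for the whole product. One subtlety to check carefully is tie-breaking: the total order on the product is derived from the weight order together with a consistent tie-break, and I must ensure the tie-break can be chosen to respect the per-coordinate structure — e.g.\ lexicographically by coordinate — so that a column-wise argument is valid; this is routine given the "arbitrarily but consistently'' clause in the preliminaries. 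The base case $r=0$ (pivot is $t$ itself for the singleton $\{t\}$, a $1$-pivot $=c^0$-pivot) and $r=1$ (the one-branch lemma, giving a $c$-pivot $=c^1$) are immediate, and the inductive step is the coordinate-restriction argument applied once more, closing the proof.
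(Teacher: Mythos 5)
Your proposal is correct and matches the paper's proof in all essentials: the paper likewise exploits the Cartesian-product structure of the branches, forms the product $L_1 \times \cdots \times L_r$ of the per-branch lower (resp.\ upper) sets, shows by an iterated application of subset-monotonicity that every element of $\{t\} \times L_1 \times \cdots \times L_r$ is dominated by $t \cup p_1 \cup \cdots \cup p_r$, and concludes from $|M| = \prod_{i} |M_i|$ and $|L_i| \geq c|M_i|$ that this product has size at least $c^r|M|$. Your coordinate-by-coordinate restriction is exactly this product bound accumulated one factor at a time, and your one-branch lemma is the paper's subset-monotonicity step; just be careful that the ``slices'' you restrict to are the lower/upper sets $L_i$ (each a $c$-fraction of $B_i$), not the exact $p_i$-slices.
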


With the three above lemmas, 
we can complete the proof of \Cref{lem:pivot} by induction on the join tree.

\section{Exact Trimmings}
\label{sec:exact}

We now look into trimmings for different types of inequality predicates
that arise in the partitioning step of our quantile algorithm (i.e., the \TRIM subroutine).
Our construction essentially removes these predicates from the query, while
ensuring that the modified query can only produce answers that satisfy them.

\subsection{\rankmin/\rankmax}

When the ranking function is \rankmin or \rankmax, 
then we need to trim predicates of the type $\min\{\Wvars\} < \wval$, $\wval \in \R$.

\begin{figure}[t]
\centering
\includegraphics[scale=0.4]{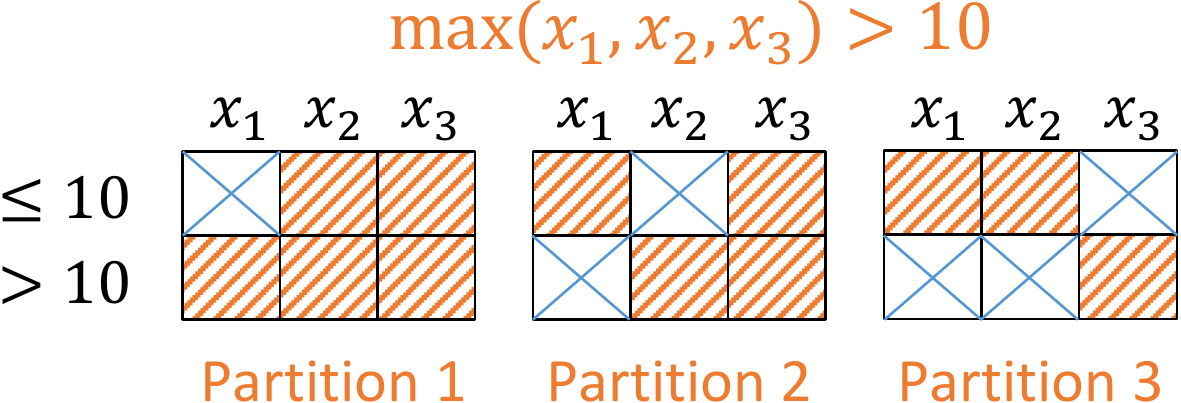}
\caption{Trimming $\max\{x_1, x_2, x_3\} > 10$ by expressing it 
as three disjoint partitions of the space of $x_1, x_2, x_3$ combinations, each one described by unary predicates (e.g., $x_1 > 10$).}
\label{fig:minmax}
\end{figure}

\begin{example}
Suppose the ranking function is \rankmax, 
the weighted variables are
$\Wvars = \{ x_1, x_2, x_3 \}$,
attribute weights are equal to database values,
and our pivot has weight $10$. 
To create the 
appropriate
partitions,
we trim predicates
$\max\{x_1, x_2, x_3\} < 10$
and $\max\{x_1, x_2, x_3\} > 10$.
Enforcing $\max\{x_1, x_2, x_3\} < 10$ is straightforward by removing
from the database all tuples with values greater than or equal to $10$ for either of the three variables.
For $\max\{x_1, x_2, x_3\} > 10$, there are three ways to satisfy
the predicate:
(1) $x_1 > 10$,
(2) $x_1 \leq 10 \land x_2 > 10$, or
(3) $x_1 \leq 10 \land x_2 \leq 10 \land x_3 > 10$.
These three cases are disjoint and cover all possibilities.
For each case, we create a fresh copy of the database and then enforce the predicates in linear time
by filtering the tuples.
Our \JQ over one of these three databases produces a partition of the answers that
satisfy the original inequality.
To return a single database and \JQ, we union the corresponding relations
and distinguish between the three partitions by appending a partition identifier to every relation.
\end{example}

Generalizing our example in a straightforward way, we show
that trimmings of such inequalities
exist for all acyclic \JQs.

\begin{lemma}
\label{lem:minmax_trim}
Given an acyclic \JQ $Q$,
variables $\Wvars \subseteq \var(Q)$,
weight functions $w_x: \dom \rightarrow \R$ for $x \in \Wvars$,
and $\wval \in \R$,
a trimming of
$\min_{x \in \Wvars} \w_x(x) < \wval$,
$\min_{x \in \Wvars} \w_x(x) > \wval$,
$\max_{x \in \Wvars} \w_x(x) < \wval$, or
$\max_{x \in \Wvars} \w_x(x) > \wval$
takes time $\bigO(n)$
and returns an acyclic \JQ.
\end{lemma}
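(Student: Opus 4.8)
The plan is to generalize the worked example for $\rankmax\{x_1,x_2,x_3\}>10$ to arbitrary acyclic \JQs and arbitrary weighted-variable sets $\Wvars$, handling the four predicate types by symmetry. First I would dispose of the two ``easy'' directions. For $\max_{x\in\Wvars}\w_x(x)<\wval$, the predicate is equivalent to the conjunction $\bigwedge_{x\in\Wvars}\w_x(x)<\wval$, a set of unary predicates; to trim it I would simply delete, in time $\bigO(n)$, every tuple $t$ of every atom that assigns to some $x\in\Wvars$ a value $a$ with $\w_x(a)\ge\wval$. The resulting database $D'$ is just a filtered copy of $D$, the query $Q'=Q$ is unchanged (hence still acyclic), and the identity map is the required $\bigO(1)$-computable bijection from $Q'(D')$ to $(Q\wedge P)(D)$ since an answer survives iff all its $\Wvars$-values are ``small.'' The case $\min_{x\in\Wvars}\w_x(x)>\wval$ is symmetric, deleting tuples that assign a value with $\w_x(a)\le\wval$.

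For the two ``hard'' directions I would use the disjoint-decomposition idea from the example. Fix an arbitrary order $x_1,\dots,x_k$ on $\Wvars$. Then $\max_{x\in\Wvars}\w_x(x)>\wval$ holds iff exactly one of the $k$ disjoint events $E_j$ occurs, where $E_j$ says $\w_{x_i}(x_i)\le\wval$ for all $i<j$ and $\w_{x_j}(x_j)>\wval$. These events are pairwise disjoint and their union is exactly the set of assignments with some $\Wvars$-value exceeding $\wval$. For each $j\in[k]$ I would build a fresh tagged copy $D_j$ of $D$ in which I filter every atom's tuples to enforce the unary predicates defining $E_j$ (drop a tuple if it assigns $x_i$, $i<j$, a value with weight $>\wval$, or assigns $x_j$ a value with weight $\le\wval$); again $Q$ is unchanged on each copy, so each is acyclic and answers $(Q\wedge E_j)(D)$ exactly. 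To return a single instance I would take $Q'$ to be $Q$ with each relational symbol $R$ replaced by a fresh symbol $R'$ of arity $a_R+1$, and let $D'$ assign to $R'$ the union over $j\in[k]$ of $\{(t,j)\mid t\in R^{D_j}\}$ — i.e., every tuple carries the partition index $j$ in its new last coordinate. Since $Q'$ is essentially $Q$ with a new variable $p$ shared by all atoms, adding $p$ to every node of a join tree of $Q$ keeps the running-intersection property, so $Q'$ remains acyclic, and its size is $\bigO(|Q|)$. An answer to $Q'(D')$ is forced to pick the same tag $j$ in every atom (the shared variable $p$), so it is in bijection with $\bigsqcup_{j}(Q\wedge E_j)(D)=(Q\wedge P)(D)$; the bijection and its inverse are $\bigO(1)$-computable (strip or attach the tag). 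Finally $\min_{x\in\Wvars}\w_x(x)<\wval$ is handled by the same construction with $E_j$ reading $\w_{x_i}(x_i)\ge\wval$ for $i<j$ and $\w_{x_j}(x_j)<\wval$.

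For the running time, each of the $k=|\Wvars|=\bigO(1)$ copies is produced by one linear scan that filters on $\bigO(1)$ unary predicates and appends a constant tag, and the final union is a concatenation of $\bigO(1)$ relations; total time $\bigO(n)$. This establishes all four claims of the lemma and also shows the output is always an acyclic \JQ, as required by the framework.

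I do not expect a real obstacle here — the construction is a routine case analysis — but the one point needing care is checking that the tagged-copy union genuinely yields a \emph{bijection} (not just an injection): because the new variable $p$ is shared across \emph{all} atoms of $Q'$, any homomorphism from $Q'$ to $D'$ must assign $p$ a single value $j$, hence lies entirely within the $j$-th tagged copy, and conversely every answer of $(Q\wedge E_j)(D)$ lifts to one in the $j$-th copy; combined with the disjointness of the $E_j$ this gives the bijection. I would also note that $\var(Q)\subseteq\var(Q')$ holds with the single added variable $p$, matching \Cref{def:exact_trimming}.
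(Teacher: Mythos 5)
Your construction is essentially identical to the paper's proof: the two easy directions are handled by unary filtering, and the two hard directions by the same disjoint decomposition into $k$ events $E_j$, tagged copies of the database, and a shared partition variable added to every atom (and to every join-tree node) to preserve acyclicity, with the bijection given by stripping the tag. The only point the paper adds that you omit is a preliminary elimination of self-joins by materializing fresh relations, which is needed so that per-atom filtering is well-defined when the same relation symbol occurs with different variable tuples (e.g., $R(x,y),R(y,x)$, where a single stored tuple assigns different values to a weighted variable depending on the atom through which it is read).
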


Combining \Cref{lem:minmax_trim} together with \Cref{lem:pivot} and \Cref{lem:quantile_exact} gives us the following result:

\begin{theorem}
\label{th:minmax}
Given an acyclic \JQ over a database $D$ of size $n$, \rankmin or \rankmax ranking function,
and $\quant \in [0, 1]$,
the \%\JQ can be answered in  
time
$\bigO(n \log n)$.
\end{theorem}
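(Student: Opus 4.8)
The plan is to obtain the theorem as a direct instantiation of the divide-and-conquer framework: I would plug \Cref{lem:pivot} (for \PIVOT) and \Cref{lem:minmax_trim} (for \TRIM) into \Cref{lem:quantile_exact}. Concretely, take $\mathcal{Q}$ to be the class of \emph{all} acyclic \JQs and fix the ranking function to \rankmin; the \rankmax case is symmetric. It then remains only to check the two hypotheses of \Cref{lem:quantile_exact} for this $\mathcal{Q}$ and to read off the running time.

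For hypothesis~(1), I would note that \rankmin and \rankmax are \rankf --- $\min$ and $\max$ over $\R$ are preserved under adding common elements to a multiset, which is exactly subset-monotonicity, as already observed when these ranking functions were introduced. Hence \Cref{lem:pivot} applies to every $Q' \in \mathcal{Q}$ and yields, for any $D$, a $c$-pivot of $Q'(D)$ in time $\bigO(n)$ with $c$ a constant depending only on the query, so we may take $g_p(n) = \bigO(n)$. For hypothesis~(2), under \rankmin we have $\w(\Wvars) = \min_{x \in \Wvars} w_x(x)$, so the predicates $\w(\Wvars) \prec \wval$ and $\w(\Wvars) \succ \wval$ are precisely $\min_{x \in \Wvars} w_x(x) < \wval$ and $\min_{x \in \Wvars} w_x(x) > \wval$ --- two of the four forms handled by \Cref{lem:minmax_trim} (the \rankmax predicates are the $\max$ variants). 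That lemma trims each in time $\bigO(n)$ and returns an acyclic \JQ, hence an element of $\mathcal{Q}$; thus $\mathcal{Q}$ is closed under the required trimmings and $g_t(n) = \bigO(n)$. The bookkeeping inequalities $\w(\Wvars) \succ \texttt{low}$ and $\w(\Wvars) \prec \texttt{high}$ trimmed in each iteration are of the same form, and since the min/max construction only filters tuples --- keeping every $\Wvars$-variable and its weight function, so $\w(\Wvars)$ is unchanged --- composing a constant number of such trimmings stays inside $\mathcal{Q}$ and costs $\bigO(n)$, with constants depending only on query size.

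Finally, with $g_p(n) = g_t(n) = \bigO(n)$, \Cref{lem:quantile_exact} answers the \%\JQ for every $Q \in \mathcal{Q}$ in time $\bigO(\max\{g_p(n),g_t(n)\}\log n) = \bigO(n\log n)$, as claimed. I do not expect a genuine obstacle: the theorem is a packaging of \Cref{lem:pivot} and \Cref{lem:minmax_trim} via \Cref{lem:quantile_exact}, and the one subtlety is making sure the trimming lands back in a class on which both subroutines remain applicable. That is precisely why I take $\mathcal{Q}$ to be all acyclic \JQs: \Cref{lem:pivot} requires only acyclicity and subset-monotonicity, and \Cref{lem:minmax_trim} preserves acyclicity, so the class is closed and the argument goes through.
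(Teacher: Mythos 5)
Your proposal matches the paper's proof exactly: the paper obtains \Cref{th:minmax} precisely by combining \Cref{lem:minmax_trim} with \Cref{lem:pivot} and \Cref{lem:quantile_exact}, relying on the facts that \rankmin/\rankmax are \rankf and that the trimming runs in $\bigO(n)$ and returns an acyclic \JQ, so the class of all acyclic \JQs is closed under it. Your additional checks (the $\texttt{low}$/$\texttt{high}$ inequalities having the same form, and $c$ depending only on query size) are correct and consistent with the paper's argument.
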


\subsection{\ranklex}

For lexicographic orders, we provided~\cite{carmeli23direct}
an $\bigO(n)$ selection algorithm that can also be used for \%\JQs.
Our divide-and-conquer approach can recover this result up to a logarithmic factor, i.e., our \%\JQ algorithm runs in time $\bigO(n \log n)$.
To achieve that, we trim lexicographic inequalities, similarly to the case of \rankmin and \rankmax.

\begin{lemma}
\label{lem:lex_trim}
Given an acyclic \JQ $Q$,
variables $\Wvars = \{x_1, \ldots, x_r\} \subseteq \var(Q)$,
weight functions $w_x': \dom \rightarrow \R$ for $x \in \Wvars$,
and $\wval \in \R^r$,
a trimming of
$(\w_{x_1}'(x_1), \ldots, \w_{x_r}'(x_r)) <_\textrm{\ranklex} \wval$ or
$(\w_{x_1}'(x_1), \ldots, \w_{x_r}'(x_r)) >_\textrm{\ranklex} \wval$
takes time $\bigO(n)$
and returns an acyclic \JQ.
\end{lemma}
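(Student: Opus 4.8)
The plan is to trim a lexicographic inequality by splitting it into a disjoint union of cases according to which coordinate is the first one that differs from $\wval$, exactly mirroring the $\rankmax$/$\rankmin$ trimming in \Cref{lem:minmax_trim}. Write $\wval = (\wval_1, \ldots, \wval_r)$ and $w_i := w_{x_i}'$. For the predicate $(w_1(x_1),\ldots,w_r(x_r)) <_{\ranklex} \wval$, observe that a tuple of values satisfies it iff there is some index $j \in [r]$ such that $w_i(x_i) = \wval_i$ for all $i < j$ and $w_j(x_j) < \wval_j$. These $r$ cases are pairwise disjoint (they fix a distinct value of the ``first differing coordinate'') and together they exactly capture the predicate, since $\wval$ itself is excluded by strictness. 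Each case is a conjunction of \emph{unary} predicates on the variables $x_1, \ldots, x_j$, so it can be enforced in linear time by filtering each relation: for a variable $x_i$ occurring in atom $R$, keep only the tuples $t \in R^D$ with $w_i(t[x_i])$ in the required range (a point constraint for $i<j$, a half-open ray for $i=j$). The $>_{\ranklex}$ case is symmetric, flipping $<$ to $>$ in the last conjunct.

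The key steps, in order: (1) rewrite the lexicographic inequality as the disjoint disjunction over the first-differing-coordinate index $j$; (2) for each $j$, make a fresh copy $D^{(j)}$ of the database and filter every relation by the relevant unary constraints, which takes $\bigO(n)$ time and does not change the hypergraph of $Q$, hence preserves acyclicity; (3) combine the $r$ copies into a single database by tagging every relation with a partition identifier $j \in [r]$ (adding one constant column shared across all atoms of copy $j$), so that a single \JQ $Q'$ over the combined database $D'$ has answer set equal to the disjoint union of the per-case answer sets; (4) check that $|Q'| = \bigO(|Q|)$ — the query gains at most one extra variable for the partition tag and the number of atoms is unchanged — and that the map from $Q'(D')$ to $(Q \wedge P)(D)$ that drops the tag column is an $\bigO(1)$-computable bijection, as required by \Cref{def:exact_trimming}. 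Since $r$ is part of the (constant) query size, the total time is $\bigO(r n) = \bigO(n)$.

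I expect the only subtlety — not really an obstacle — to be the bookkeeping in step (3): one must ensure that the partition tag is a new variable placed in \emph{every} atom (so the tagged relations of copy $j$ join only with each other and never across copies), and that adding this single shared variable keeps the join tree valid (attach the tag variable to the running-intersection structure by noting it appears in all nodes, so the existing join tree still works with the vertex set enlarged by one universally-present vertex). Everything else is a direct replay of the $\rankmax$ argument, with ``$x_j$ is the witness coordinate'' playing the role of ``$x_j$ is the first variable exceeding $\wval$''. A brief remark can note that the same construction handles the non-strict variants, and that equality-type lexicographic predicates (all coordinates fixed) are the degenerate single-case version, which is what the termination footnote of \Cref{sec:quantile-alg} alludes to.
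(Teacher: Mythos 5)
Your proposal is correct and matches the paper's proof exactly: the paper likewise reduces to the $\rankmax$ construction of \Cref{lem:minmax_trim}, using the $r$ disjoint cases $P_i = \{ \w_{x_1}'(x_1) = \wval_1, \ldots, \w_{x_{i-1}}'(x_{i-1}) = \wval_{i-1}, \w_{x_i}'(x_i) < \wval_i \}$ (first differing coordinate), with the same per-case filtering, partition-identifier union, and tag-dropping bijection.
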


\subsection{Partial \ranksum}

We now consider the case of \ranksum.
While we previously gave a dichotomy~\cite{carmeli23direct} for all self-join-free
\JQs,
this result is limited to full \ranksum.
We now provide a more fine-grained dichotomy 
where certain variables may not participate in the ranking.
For example, the 3-path \JQ $R_1(x_1, x_2), R_2(x_2, x_3), R_3(x_3, x_4)$ would be classified as intractable by the prior dichotomy,
yet with weighted variables $\Wvars = \{x_1, x_2, x_3 \}$,
we show that it is in fact tractable.

The positive side of our dichotomy requires
a trimming of additive inequalities.
We rely on a known algorithm that can be applied whenever
the \ranksum variables appear in adjacent join-tree nodes.

\begin{lemma}[\cite{tziavelis21inequalities}]
\label{lem:partial_sum_trim}
Given a set of variables $\Wvars$,
let $\mathcal{Q}$ be the class of acyclic \JQs $Q$ for which there
exists a join tree where $\Wvars \subseteq \var(Q)$ belong to adjacent join-tree nodes.
Then, for all $Q \in \mathcal{Q}$,
weight functions $w_x: \dom \rightarrow \R$ for $x \in \Wvars$,
and $\wval \in \R$, 
a trimming of
$\sum_{x \in \Wvars} \w_x(x) < \wval$ or
$\sum_{x \in \Wvars} \w_x(x) > \wval$
takes time $\bigO(n \log n)$
and returns a \JQ $Q' \in \mathcal{Q}$.
\end{lemma}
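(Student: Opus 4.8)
The plan is to cite and adapt the construction of Tziavelis et al.~\cite{tziavelis21inequalities} for representing additive inequality predicates, but to carefully track how it interacts with our trimming definition (Definition~\ref{def:exact_trimming}): we must produce a \JQ $Q'$ of size $\bigO(|Q|)$, with $\var(Q)\subseteq\var(Q')$, that stays inside the class $\mathcal{Q}$, together with a constant-time-computable bijection from $Q'(D')$ to $(Q\wedge P)(D)$. First I would fix a join tree of $Q$ in which the two adjacent nodes $V_a,V_b$ (an edge of the tree) jointly cover $\Wvars$, i.e.\ $\Wvars\subseteq\mathbf{X}_a\cup\mathbf{X}_b$. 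The idea is that the inequality $\sum_{x\in\Wvars}\w_x(x)\prec\wval$ can be split as $\bigl(\sum_{x\in\Wvars\cap\mathbf{X}_a}\w_x(x)\bigr) + \bigl(\sum_{x\in\Wvars\cap\mathbf{X}_b}\w_x(x)\bigr)\prec\wval$; writing $s_a$ for the first partial sum (a value determined by the $V_a$-tuple) and $s_b$ for the second (determined by the $V_b$-tuple), this is a relation $s_a < \wval - s_b$ between one value per side of the tree edge.

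The core step is the standard trick for turning a two-sided additive inequality into an equijoin through an auxiliary ``threshold'' column, performed over the shared variables $\mathbf{X}_a\cap\mathbf{X}_b$. For each value $a$ of the $\mathbf{X}_a\cap\mathbf{X}_b$-projection, sort the relevant $s_b$-values of the $V_b$-side; this induces a small number of ``buckets'' of $s_a$-values, and one replaces the edge between $V_a$ and $V_b$ by introducing a fresh variable $v$ that encodes, for each joining pair, the rank/bucket index so that the pair survives iff the inequality holds. Concretely one augments the $V_a$-relation and the $V_b$-relation each with the column $v$, populates them so that $(t_a,t_b)$ join on the old variables \emph{and} on $v$ exactly when $s_a + s_b \prec \wval$, and leaves every other node of the join tree untouched. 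Because $v$ is added only to the two adjacent nodes, the resulting hypergraph is still acyclic — in fact one can keep the same tree shape — and $\Wvars$ still lies within two adjacent nodes (the same two), so $Q'\in\mathcal{Q}$; the query grows only by a bounded number of columns, so $|Q'|=\bigO(|Q|)$. The bijection from $Q'(D')$ to $(Q\wedge P)(D)$ simply forgets the value of $v$ (and of any analogous fresh variables), which is computable in $\bigO(1)$, and it is a bijection because for each surviving $(t_a,t_b)$ pair there is exactly one admissible $v$. The strict-greater-than case $\sum_{x\in\Wvars}\w_x(x)\succ\wval$ is symmetric (flip the inequality, or equivalently swap the roles of ``below'' and ``above'' in the bucketization). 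For the running time: sorting the $s_b$-values within each join group and assigning bucket indices is the dominant cost and takes $\bigO(n\log n)$ overall by the comparison-sorting assumption of our model; all remaining bookkeeping (projecting, materializing the augmented relations, rewriting the query) is linear.

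The main obstacle — and the only place real care is needed — is that $\Wvars$ may straddle the edge in a lopsided or degenerate way: some weighted variables could appear in \emph{both} $V_a$ and $V_b$ (they are shared variables of that tree edge), or all of $\Wvars$ could sit in a single node. In the first situation one must be careful not to double-count a shared variable's weight when forming $s_a$ and $s_b$; the fix is to assign each $x\in\Wvars$ to exactly one side (e.g.\ always to $V_a$ when $x\in\mathbf{X}_a\cap\mathbf{X}_b$), mirroring the $\mu$-assignment trick used for tuple weights in the preliminaries. In the degenerate single-node case the ``inequality'' becomes a unary filter on that node and is trimmed trivially in linear time by deleting violating tuples, with $v$ unneeded and the bijection the identity. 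Once these boundary cases are folded in, the construction of~\cite{tziavelis21inequalities} applies essentially verbatim, and the three required properties (size, acyclicity-preservation inside $\mathcal{Q}$, and the $\bigO(1)$ bijection) follow by inspection.
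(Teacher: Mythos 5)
Your proposal is correct and takes essentially the same approach as the paper, which imports \Cref{lem:partial_sum_trim} from \cite{tziavelis21inequalities} without reproving it and sketches the very same device in the example of \Cref{sec:quantile-alg}: augment the two adjacent relations with a fresh join variable $v$ so that a pair of tuples joins on $v$ exactly when the additive inequality holds, each satisfying pair admitting a unique $v$-value, yielding the $\bigO(1)$ bijection, acyclicity, and membership in $\mathcal{Q}$. Your treatment of the boundary cases also matches the paper, which resolves shared weighted variables via the $\mu$-assignment of tuple weights in the preliminaries and handles the single-node case by a separate linear-time filter in the proof of \Cref{th:partial_sum}.
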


We are now in a position to state our dichotomy:

\begin{theorem}
\label{th:partial_sum}
Let $Q$ be a self-join-free \JQ, $\calH(Q)$ its hypergraph, and
$\Wvars \subseteq \var(Q)$ the variables of a \ranksum{} ranking function.
\begin{itemize}

\item
If $\calH(Q)$ is acyclic, any set of independent variables of $\Wvars$ is of size at most $2$, and any chordless path between two $\Wvars$ variables is of length at most $3$, then \%\JQ can be answered in $\bigO(n \log^2 n)$.

\item Otherwise, \%\JQ cannot be answered in $\bigO(n \polylog n)$,
assuming \ThreeSUM{} and \hyperclique{}.
\end{itemize}

\end{theorem}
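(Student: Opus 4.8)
I would prove the two directions separately; both rest on structural facts about $\calH(Q)$ and on the pieces already assembled (\Cref{lem:partial_sum_trim,lem:pivot,lem:quantile_exact}) together with the prior \ThreeSUM{} reductions of \cite{carmeli23direct}.

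\textbf{Upper bound.} The plan is to show that the three stated conditions force $Q$ into the class $\mathcal{Q}$ of \Cref{lem:partial_sum_trim} --- i.e., that some join tree of $\calH(Q)$ has two adjacent nodes whose union contains $\Wvars$ --- after which we are done: \Cref{lem:pivot} gives $g_p(n)=\bigO(n)$ because \ranksum{} is subset-monotone, \Cref{lem:partial_sum_trim} gives $g_t(n)=\bigO(n\log n)$ while staying inside $\mathcal{Q}$, and \Cref{lem:quantile_exact} then delivers the $\bigO(n\log^2 n)$ bound. To obtain the two covering hyperedges, fix any join tree $T$, and for each $v\in\Wvars$ let $T_v$ be the subtree of nodes of $T$ that contain $v$ (connected, by running intersection). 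The intersection graph $G$ of $\{T_v\}_{v\in\Wvars}$ is precisely the primal graph of $\calH(Q)$ restricted to $\Wvars$, hence chordal (the primal graph of an acyclic hypergraph is chordal), hence perfect; so its complement $\bar{G}$ is perfect as well and $\chi(\bar{G})=\omega(\bar{G})=\alpha(\Wvars)\le 2$, where $\alpha(\Wvars)$ is the size of the largest independent subset of $\Wvars$. Thus $\Wvars$ partitions into two $G$-cliques $C_1,C_2$, and by the Helly property of subtrees of a tree each $C_i$ lies inside a single node $e_i$ of $T$, so $\Wvars\subseteq e_1\cup e_2$. The remaining, more delicate step is to argue that such a covering pair of hyperedges can always be made adjacent in some join tree; this is exactly where the length bound of $3$ on chordless $\Wvars$-paths is used --- if no reorganization of the join tree makes a covering pair adjacent, one can extract a chordless path of length at least $4$ between a vertex of $C_1$ and a vertex of $C_2$ (as happens, e.g., for $R_1(x_1,x_2),R_2(x_2,x_3),R_3(x_3,x_4)$ with $\Wvars=\{x_1,x_4\}$).

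\textbf{Lower bound.} The negation of the conditions splits into three cases. If $\calH(Q)$ is cyclic, then a \%\JQ algorithm in particular decides whether $Q(D)=\emptyset$, so an $\bigO(n\polylog n)$ algorithm would refute \hyperclique{} via the self-join-free lower bound of \cite{bb:thesis}. If $\calH(Q)$ is acyclic but $\Wvars$ contains three pairwise non-adjacent variables $y_1,y_2,y_3$, I would reduce from \ThreeSUM{} as in \cite{carmeli23direct}: using acyclicity, fix every other variable of $Q$ to a constant chosen so that the relations become products, which lets $y_1,y_2,y_3$ range freely over the three \ThreeSUM{} input sets while every other weighted variable contributes a fixed offset; the multiset of answer weights is then a shift of $A+B+C$, and a \%\JQ oracle --- via binary search on the requested index, which counts the answers below a threshold --- decides whether $0$ is realized, solving \ThreeSUM{}. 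Finally, if $\calH(Q)$ is acyclic, $\alpha(\Wvars)\le 2$, but there is a chordless path of length at least $4$ between two $\Wvars$-vertices, I would restrict $Q$ to the subquery supported on this path, obtaining a path \JQ whose two weighted endpoints lie in non-adjacent atoms, and then adapt the \ThreeSUM{} reduction of \cite{carmeli23direct} for the 3-path under full \ranksum{}.

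\textbf{Main obstacle.} The hard part is the boundary regime governed by the chordless-path bound, on both sides: proving, for the upper bound, that a covering pair of hyperedges can be made adjacent in a join tree \emph{exactly when} every chordless $\Wvars$-path has length at most $3$; and establishing, for the lower bound, \ThreeSUM{}-hardness of a path query when only its two endpoints are weighted --- this no longer reduces to the ``three independent variables'' pattern and must instead exploit the path structure itself to carry the middle summand.
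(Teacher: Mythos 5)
Your upper-bound route to the ``two covering nodes'' fact is genuinely different from the paper's and is arguably cleaner: the paper extracts two atoms covering $\Wvars$ by a direct iteration over triples of atoms, whereas you get it from chordality of the primal graph of an acyclic hypergraph, perfection of its complement, and the Helly property of subtrees of the join tree; both arguments are sound. However, you leave essentially open the step on which the paper spends most of its effort, namely that a covering pair of nodes can be made \emph{adjacent} in some join tree (this is the paper's \Cref{lem:adjacent_join_tree}, needed to enter the class $\mathcal{Q}$ of \Cref{lem:partial_sum_trim}). The paper proves it by taking the join-tree path $P$ between the two covering nodes, shrinking to the last node $R$ containing all of $\Wvars\cap R'$ and the first node $S$ containing all of $\Wvars\cap S'$, and then a three-case analysis on the variables of the interior of $P$ that occur in $R$ or $S$: in two cases an edge of $P$ can be re-routed to make $R$ and $S$ adjacent while preserving running intersection, and in the third case one exhibits $x\in\Wvars$ occurring only in $R$ and $y\in\Wvars$ occurring only in $S$ (among the nodes of $P$), so that either some consecutive pair on $P$ shares no variable (and the tree can be re-wired) or $x\text{--}u\text{--}\cdots\text{--}v\text{--}y$ is a chordless path of length at least $4$, contradicting the hypothesis. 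You correctly locate this as the delicate point, but without an argument of this kind the positive side is not established.

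The more substantive gap is in the lower bound for the case of a chordless $\Wvars$-path of length at least $4$. You propose to adapt the \ThreeSUM{} reduction to a path whose two endpoints are the only weighted variables, and you acknowledge that ``carrying the middle summand'' is the obstacle; but this is aimed at the wrong hypothesis. With only two weighted variables the difficulty is not additive at all: deciding how many endpoint pairs have small total weight amounts to counting which pairs are \emph{connected} through the middle relations, which is a Boolean-matrix-multiplication/triangle-detection phenomenon. Accordingly, the paper handles this case by a reduction from triangle detection (\cite[Lemma 7.13]{carmeli23direct}), i.e., hardness under \hyperclique{} rather than \ThreeSUM{}, observing that the cited reduction already assigns non-zero weights only to the first and last variables of the path and that longer chordless paths are handled by turning the extra relations into equalities. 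This division of labor is exactly why the theorem assumes both \ThreeSUM{} (for the independent-set-of-size-$3$ case, where your sketch matches \cite[Corollary 7.11]{carmeli23direct}) and \hyperclique{} (for the cyclic case and the long-chordless-path case). Your treatment of the cyclic case and of the three-independent-variables case agrees with the paper; the long-path case needs to be redone with the triangle-detection reduction.
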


We note that the positive side also applies to \JQs with self-joins.

\section{Approximate Trimming for \ranksum}
\label{sec:approx}

We now move on to devise
an $\epsilon$-lossy trimming for additive inequalities
in order to get a deterministic approximation
for \ranksum and \JQs  
beyond those covered by \Cref{th:partial_sum}.

\begin{lemma}
\label{lem:approx_trim}
Given an acyclic \JQ $Q$, 
variables $\Wvars \subseteq \var(Q)$,
weight functions $w_x: \dom \rightarrow \R$ for $x \in \Wvars$,
$\wval \in \R$,
and $\epsilon \in (0, 1)$,
an $\epsilon$-lossy trimming of
$\sum_{x \in \Wvars} \w_x(x) < \wval$ or
$\sum_{x \in \Wvars} \w_x(x) > \wval$
takes time $\bigO(\frac{1}{\epsilon^2} n \log^2 n \log \frac{n}{\epsilon})$
and returns an acyclic \JQ.
\end{lemma}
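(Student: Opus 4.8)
The plan is to build an $\epsilon$-lossy trimming for $\sum_{x\in\Wvars}\w_x(x)<\wval$ (the $>\wval$ case is symmetric) by reducing it, via message passing on the join tree, to a family of \emph{exact} trimmings of additive inequalities whose variables lie in adjacent join-tree nodes, so that \Cref{lem:partial_sum_trim} can be invoked as a black box. The source of hardness is that the $\Wvars$ variables can be spread arbitrarily far apart in the join tree, so a single additive inequality cannot be trimmed exactly in quasilinear time. The key idea is to \emph{discretize the partial sums}: root the join tree and, bottom-up, compute for each tuple $t$ the multiset of achievable values of $\sum_{x\in\Wvars\cap(\text{subtree of }t)}\w_x(x)$ over the partial answers rooted at $t$. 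This multiset can be $\Omega(n^k)$, but we only keep a \emph{bucketed} summary: partition the range of achievable partial sums into $\bigO(\frac{1}{\epsilon'}\log n)$ buckets (geometric or additive, chosen so that rounding each partial sum to its bucket boundary changes the final sum by a controlled amount), and attach to each tuple a bucket label together with the count of partial answers in that bucket. The number of distinct ``bucket-annotated'' copies of each tuple is then polylogarithmic in $n$ and polynomial in $1/\epsilon'$.

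The steps, in order, would be: (1) run the counting message pass of \Cref{sec:message_passing} to get subtree-answer counts; (2) bottom-up, compute the set of reachable bucketed partial-sum values for each tuple — at a node this is an interval-convolution of children's bucket sets plus the node's own $\Wvars$-weights, truncated back to $\bigO(\frac{1}{\epsilon'}\log n)$ buckets by re-bucketing, where the re-bucketing step is exactly where answers get \emph{lost} (two nearby partial sums collapsing to one representative, with the smaller-count side dropped or, better, only the \emph{rounding} loss is incurred and no full answer is dropped — I would aim for the latter so that the injection in the lossy-trimming definition is clean); (3) create, for the root, the annotated relation whose tuples carry a bucket label $b$; now the constraint $\sum<\wval$ becomes a constraint that the \emph{final} bucket label is below the bucket of $\wval$, which is a selection on a single relation and hence trivially trimmed; (4) propagate this selection back down, keeping only bucket-annotated tuples consistent with some global choice summing into an admissible bucket — this is again an acyclic \JQ (the bucket label is a new variable, shared between a parent and child exactly as in the construction of \Cref{lem:partial_sum_trim}), so acyclicity is preserved and $|Q'|=\bigO(|Q|)$. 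The error analysis: each of the $\ell-1$ re-bucketing steps on a root-to-leaf path contributes a rounding error bounded by a $(1+\epsilon'/\ell)$-type factor, and an answer is ``lost'' only if its true partial sum is within the rounding margin of $\wval$ at some level; bounding the number of such borderline answers by $\epsilon'|(Q\wedge P)(D)|$ is the crux, and it follows by choosing the bucket widths so that the total measure of the ``boundary band'' around $\wval$ is at most an $\epsilon'$ fraction — this is where I would spend a randomized-rounding-style argument choosing the bucket grid offset (or, since we need determinism, trying $\bigO(1/\epsilon')$ candidate offsets and arguing one of them is good by an averaging argument, or maintaining a refined grid near $\wval$).

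The main obstacle I expect is precisely reconciling \emph{determinism} with the ``boundary band has small measure'' requirement: a random grid offset makes the expected number of borderline answers small, but to get a deterministic guarantee we cannot afford to count borderline answers for each candidate offset naively (that would reintroduce the \ThreeSUM-hard counting problem). I anticipate the fix is to use a finer, adaptively refined bucketing concentrated around $\wval$ — doubling the resolution for buckets near the threshold — so that the number of answers that could be lost is deterministically bounded by the count in the single bucket straddling $\wval$, and then to recurse or re-trim within that one bucket until its answer-count is an $\epsilon'$ fraction of the total, which terminates after $\bigO(\log n)$ refinements. Each refinement costs one message pass with $\bigO(\frac{1}{\epsilon'}\log n)$-sized bucket sets and one invocation of the exact adjacent-variable trimming of \Cref{lem:partial_sum_trim} (itself $\bigO(n\log n)$ after we have reduced to adjacent buckets), giving a total of $\bigO(\frac{1}{\epsilon^2} n\log^2 n\log\frac{n}{\epsilon})$ — the $1/\epsilon^2$ coming from $(1/\epsilon')$ bucket resolution times $(1/\epsilon')$ from needing $\epsilon'=\Theta(\epsilon/\log n)$ per-level budget across $\bigO(\log n)$ levels, and the extra $\log\frac{n}{\epsilon}$ from the refinement loop. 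Finally I would verify that the resulting $Q'$ is acyclic and that composing the per-tuple bijections along the tree yields the required $\bigO(1)$-computable injection into $(Q\wedge P)(D)$ with $|Q'(D')|\ge(1-\epsilon)|(Q\wedge P)(D)|$.
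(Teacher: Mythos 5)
Your high-level architecture matches the paper's: a bottom-up message pass that annotates tuples with bucketed partial sums and their counts, duplication of parent tuples per bucket linked to the child via a fresh shared variable (preserving acyclicity), and a final filter at the root. However, there is a genuine gap at exactly the point you flag as the crux. You bucket by a \emph{value grid} and then must bound the mass of answers falling in a ``boundary band'' around $\wval$; your proposed fixes (a random grid offset, averaging over $\bigO(1/\epsilon')$ offsets, or an adaptive refinement loop near the threshold) are either randomized, require counting borderline answers per offset (which reintroduces the very counting problem that is \ThreeSUM-hard), or leave termination and correctness of the refinement unestablished. The paper avoids this entirely by bucketing by \emph{rank} rather than by value: the sorted multiset of partial sums is cut into buckets of geometrically increasing cumulative count and each element is replaced by the \emph{maximum} of its bucket (\Cref{lem:sketch}, from Abo-Khamis et al.). This yields a deterministic, one-sided guarantee uniform over \emph{all} thresholds $\wval$ — the sketch never overcounts and undercounts by at most an $\epsilon$ fraction — so no boundary-band argument, offset search, or recursion near $\wval$ is needed. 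Errors then compound only over the constant tree depth (the paper sets $\epsilon'=\epsilon/4^\ell$, not $\Theta(\epsilon/\log n)$ per level over $\bigO(\log n)$ levels as you suggest), and the $1/\epsilon^2$ in the running time comes from the parent relation growing by a factor of $\log_{1+\epsilon'}n$ per child of a \emph{binary} join tree, not from your offset/refinement accounting.

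Two secondary points. First, for the map from $Q'(D')$ to $(Q\wedge P)(D)$ to be an injection you must ensure that all occurrences of a single child tuple's sum land in the \emph{same} bucket; otherwise that tuple is a source for two buckets, joins with two copies of the parent, and duplicates answers. The paper needs an explicit bucket-boundary adjustment for this, which your proposal does not address. Second, you do not need \Cref{lem:partial_sum_trim} at all once the sums are embedded in the root relation — the final constraint is a unary filter — so invoking it in your running-time analysis is superfluous.
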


This, together with \Cref{lem:quantile_approx,lem:pivot}
gives us the following:

\begin{theorem}
Given an acyclic \JQ $Q$ over a database $D$ of size $n$, \ranksum ranking function,
$\quant \in [0, 1]$,
and $\epsilon \in (0, 1)$,
the $\epsilon$-approximate \%\JQ can be answered in time
$\bigO(\frac{1}{\epsilon^2} n \log^5 n \log \frac{n}{\epsilon})$.
\end{theorem}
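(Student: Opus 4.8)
The plan is to instantiate the generic approximate-quantile framework of \Cref{lem:quantile_approx} with $\mathcal{Q}$ taken to be the class of \emph{all} acyclic \JQs, the \ranksum ranking function (so $\w(\Wvars) = \sum_{x \in \Wvars} \w_x(x)$), and the two required subroutines supplied by \Cref{lem:pivot} and \Cref{lem:approx_trim}. First I would verify the hypotheses of \Cref{lem:quantile_approx}. Hypothesis (1) holds because \ranksum is \rankf (its aggregate is addition, which is trivially subset-monotone), so \Cref{lem:pivot} yields a $c$-pivot of $Q'(D)$ in time $g_p(n) = \bigO(n)$ with a constant $c$ depending only on the fixed query. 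Hypothesis (2) holds by \Cref{lem:approx_trim}: for every $\wval \in \R$ and every $\epsilon' > 0$ it produces an $\epsilon'$-lossy trimming of $\sum_{x \in \Wvars}\w_x(x) \prec \wval$ and of $\sum_{x \in \Wvars}\w_x(x) \succ \wval$ in time $g_t(n,\epsilon') = \bigO\big(\tfrac{1}{\epsilon'^2}\, n \log^2 n \log\tfrac{n}{\epsilon'}\big)$, and — crucially — the returned \JQ is again acyclic, hence stays in $\mathcal{Q}$. This closure property is exactly what lets us use one fixed class $\mathcal{Q}$ throughout, so pivot selection and trimming remain applicable across all iterations of the divide-and-conquer loop.

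With both hypotheses in place, \Cref{lem:quantile_approx} gives an $\epsilon$-approximate \%\JQ algorithm running in time $\bigO\big(\max\{g_p(n),\, g_t(n,\epsilon')\}\,\log n\big)$, where $\epsilon' = \tfrac{\epsilon}{2\lceil \ell \log_{1/(1-c)} n\rceil}$ and $\ell$ is the (constant) number of atoms of $Q$. The remaining work is a substitution and a careful accounting of logarithmic factors. Since $\ell$ and $c$ are constants, $1/\epsilon' = \Theta\big(\tfrac{\log n}{\epsilon}\big)$ and hence $1/\epsilon'^2 = \Theta\big(\tfrac{\log^2 n}{\epsilon^2}\big)$; moreover $\log\tfrac{n}{\epsilon'} = \log\tfrac{2n\lceil \ell \log_{1/(1-c)} n\rceil}{\epsilon} = \bigO\big(\log\tfrac{n}{\epsilon}\big)$ because $\log(n\log n) = \bigO(\log n)$. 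Plugging these into $g_t$ gives $g_t(n,\epsilon') = \bigO\big(\tfrac{1}{\epsilon^2}\, n \log^4 n \log\tfrac{n}{\epsilon}\big)$, which dominates $g_p(n) = \bigO(n)$; multiplying by the outer $\log n$ factor then yields the claimed bound $\bigO\big(\tfrac{1}{\epsilon^2}\, n \log^5 n \log\tfrac{n}{\epsilon}\big)$.

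Since \Cref{lem:quantile_approx}, \Cref{lem:pivot}, and \Cref{lem:approx_trim} are all already available, this final step contains no genuine obstacle; the only points requiring care are (i) confirming that \ranksum meets the \rankf hypothesis so that \Cref{lem:pivot} applies, and (ii) checking that the lossy trimming preserves acyclicity so that $\mathcal{Q}$ may be taken as all acyclic \JQs — both immediate from the cited statements. The real difficulty of the whole development is concentrated in \Cref{lem:approx_trim} itself, which is proved separately.
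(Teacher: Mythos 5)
Your proposal is correct and follows exactly the paper's route: the theorem is obtained by instantiating \Cref{lem:quantile_approx} with the pivot subroutine from \Cref{lem:pivot} and the lossy trimming from \Cref{lem:approx_trim}, with $\mathcal{Q}$ the class of all acyclic \JQs (closed under the trimming since \Cref{lem:approx_trim} returns an acyclic \JQ). Your explicit accounting of the logarithmic factors --- $1/\epsilon'^2 = \Theta(\log^2 n/\epsilon^2)$ and $\log(n/\epsilon') = \bigO(\log(n/\epsilon))$, yielding $\log^4 n$ inside $g_t$ and $\log^5 n$ after the outer $\log n$ --- is exactly the substitution the paper leaves implicit, and it checks out.
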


To achieve the trimming, we adapt an algorithm of Abo-Khamis et al.~\cite{khamis21approx},
which we refer to as \APPROXCOUNT.
It computes an approximate count (or more generally, a semiring aggregate) over the answers
to acyclic \JQs with additive inequalities. 
In contrast, we need not only the count of answers,
but an efficient \emph{relational representation} of them as \JQ answers over a new database.
We only discuss the case of less-than ($<$), since the case of greater-than ($>$) is symmetric.
The detailed pseudocode is given in \Cref{sec:approx_details}.

\introparagraph{Message passing}
\APPROXCOUNT uses message passing (see \Cref{sec:message_passing}).
We first describe the exact, but costly, version of the algorithm.
The message sent by a tuple is a multiset containing the (partial) sums
of partial query answers in its subtree.
Messages are aggregated as follows:
(1) The $\oplus$ operator that combines multisets within a join group
is multiset union ($\uplus$).
(2) The $\otimes$ operator that combines multisets from different children
is pairwise summation (applied as a binary operator).
The messages emitted by the root-node tuples contain all query-answer sums, 
which can be leveraged to count the number of answers that satisfy the inequality.

\introparagraph{Sketching}
Sending all possible sums up the join tree is intractable since, in the worst case, 
their number is equal to the number of \JQ answers.
For this reason, \APPROXCOUNT applies \emph{sketching} to compress the messages.
The basic idea is to replace different elements in a multiset with the same element; 
the efficiency gain is due to the fact that an element $s$ that appears $r$ times can be represented as $s \times r$.
In more detail, the multiset elements are split into buckets, and subsequently, each element within a bucket is replaced by the maximum element of the bucket.
A sketched multiset $L$ is denoted by $\mathbb{S}_\epsilon(L)$,
where $\epsilon$ is a parameter that determines the number of buckets.
Let 
$\countlessthan{L}{\wval}$
be the number of elements of $L$ that are less than $\wval \in \R$.
By choosing buckets appropriately, we can
guarantee that $\countlessthan{\mathbb{S}_\epsilon(L)}{\wval}$
is close to $\countlessthan{L}{\wval}$ for all possible $\wval$.
 
\begin{lemma}[$\epsilon$-Sketch~\cite{khamis21approx}]
\label{lem:sketch}
For a multiset $L \in \N^\R$ and $\epsilon \in (0,1)$, we can construct a sketch $\mathbb{S}_\epsilon(L)$ 
with $\bigO(\log_{1 + \epsilon} |L|)$ distinct elements
such that for all $\wval \in \R$, we have
$(1 - \epsilon) \countlessthan{L}{\wval} \leq \countlessthan{\mathbb{S}_\epsilon(L)}{\wval} \leq \countlessthan{L}{\wval}$.
\end{lemma}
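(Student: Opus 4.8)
The plan is to round every element of $L$ \emph{upward} to a block-maximum of a fixed geometric partition of the sorted sequence. Rounding only increases values, so the upper bound $\countlessthan{\mathbb{S}_\epsilon(L)}{\wval}\le\countlessthan{L}{\wval}$ holds for every $\wval$ for free: any sketched copy that is still $<\wval$ was already $<\wval$ before rounding. All the work is in the lower bound, and the point is to commit to a single partition that controls the loss simultaneously for all thresholds $\wval$.

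Concretely, I would list the elements of $L$ with multiplicity in nondecreasing order, $e_1\le\cdots\le e_m$ with $m=|L|$, breaking ties so that equal values are consecutive, and put sorted position $i$ into block $B_j$ where $j$ is the unique index with $(1+\epsilon)^{j-1}\le i<(1+\epsilon)^{j}$. The sketch assigns every position of $B_j$ the value $\max_{i\in B_j}e_i$, stored once together with the multiplicity $|B_j|$. The number of nonempty blocks is $\bigO(\log_{1+\epsilon}m)=\bigO(\log_{1+\epsilon}|L|)$, hence that many distinct values, and the sketch is produced by one sort followed by one linear sweep.

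For correctness, I would fix $\wval$ and set $k=\countlessthan{L}{\wval}$, so that positions $1,\ldots,k$ are exactly the elements $<\wval$. After rounding, a position $i\le k$ is still $<\wval$ iff the maximum of its block is $<\wval$, i.e.\ iff its whole block lies within $[1,k]$; hence the only positions that can drop out of the ``$<\wval$'' count are those in the at most one block that straddles position $k$. Such a straddling block contains a position $\le k$ and a position $>k$, so it is non-singleton, and $|B_j|\ge 2$ forces $(1+\epsilon)^{j-1}\ge 1/\epsilon$; therefore $|B_j|\le(1+\epsilon)^{j}-(1+\epsilon)^{j-1}=\epsilon(1+\epsilon)^{j-1}\le\epsilon\cdot\min B_j\le\epsilon k$. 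Thus at most $\epsilon k$ elements are lost, which gives $\countlessthan{\mathbb{S}_\epsilon(L)}{\wval}\ge(1-\epsilon)\countlessthan{L}{\wval}$, and since the partition does not depend on $\wval$ this holds uniformly.

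The main obstacle is exactly this uniformity: designing one threshold-independent partition that simultaneously bounds the leakage for every $\wval$. Geometric bucketing by sorted position is what makes it go through, because for any $\wval$ precisely one block straddles the prefix of size $\countlessthan{L}{\wval}$, and that block, being non-singleton, has length at most an $\epsilon$-fraction of the prefix. Everything else — tie-breaking, the integer rounding of the boundaries $(1+\epsilon)^{j}$, skipping empty blocks, and verifying the bucket count — is routine bookkeeping.
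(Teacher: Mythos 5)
The paper imports this lemma from Abo-Khamis et al.\ and gives no proof of its own, but your construction --- sort $L$, partition sorted positions into geometrically growing blocks $[(1+\epsilon)^{j-1},(1+\epsilon)^{j})$, and round each block up to its maximum --- is exactly the sketch the paper describes, and your argument for both directions is sound: rounding up gives the upper bound for free, and for the lower bound only the single block straddling the prefix of length $\countlessthan{L}{\wval}$ can leak, and being non-singleton its size is at most an $\epsilon$-fraction of its smallest position, hence of $\countlessthan{L}{\wval}$. The one nit is that $|B_j|\le(1+\epsilon)^{j}-(1+\epsilon)^{j-1}$ ignores integrality of the block boundaries (the number of integers in an interval can exceed its length by one), but since at least one position of the straddling block lies beyond the prefix, the number of positions actually lost is at most $|B_j|-1\le\epsilon(1+\epsilon)^{j-1}\le\epsilon\,\countlessthan{L}{\wval}$, so the bound survives.
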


\APPROXCOUNT sketches all messages and bounds the error incurred by the two message-passing operations ($\oplus, \otimes$).

\introparagraph{Relational representation}
Our goal is to construct a relational representation of the \JQ answers
which satisfy the inequality that we want to trim.
The key idea is to \emph{embed the sums} contained in the messages of \APPROXCOUNT
into the database relations 
so that each tuple stores a \emph{unique sum} and all answers in its subtree approximately have that sum.
The reasoning behind this is that we can then remove from the database the root tuples whose associated sum does not satisfy the inequality.
However, in \APPROXCOUNT a message is a multiset of sums, and hence 
the main technical challenge we address below is how to achieve a unique sum per tuple (and its subtree).

\introparagraph{Separating sums}
Let $\sumset(t')$ be the message sent by a tuple $t'$ in a child relation $S$.
Then, according to \APPROXCOUNT, a tuple $t$ in the parent relation $R$
receives a message $\sumset(b) = \mathbb{S}_{\epsilon'}(\uplus_{t' \in b} \sigma(t'))$
for some $\epsilon'$ and join group $b$.
We separate the sums in this multiset by creating  
a number of copies of $t$,
equal to the number of distinct sums
in $\sumset(b)$.
Each $t$-copy is associated with a unique bucket $e$, described by a
sum value $e_s$ and a multiplicity $e_m$.
To avoid duplicating query answers,
we restrict each $t$-copy to join only with the \emph{source tuples} of its associated bucket $e$,
i.e., the child tuples $t' \in S$ whose messages were assigned to bucket $e$ during sketching.

\begin{figure}[t]
\centering
\includegraphics[scale=0.37]{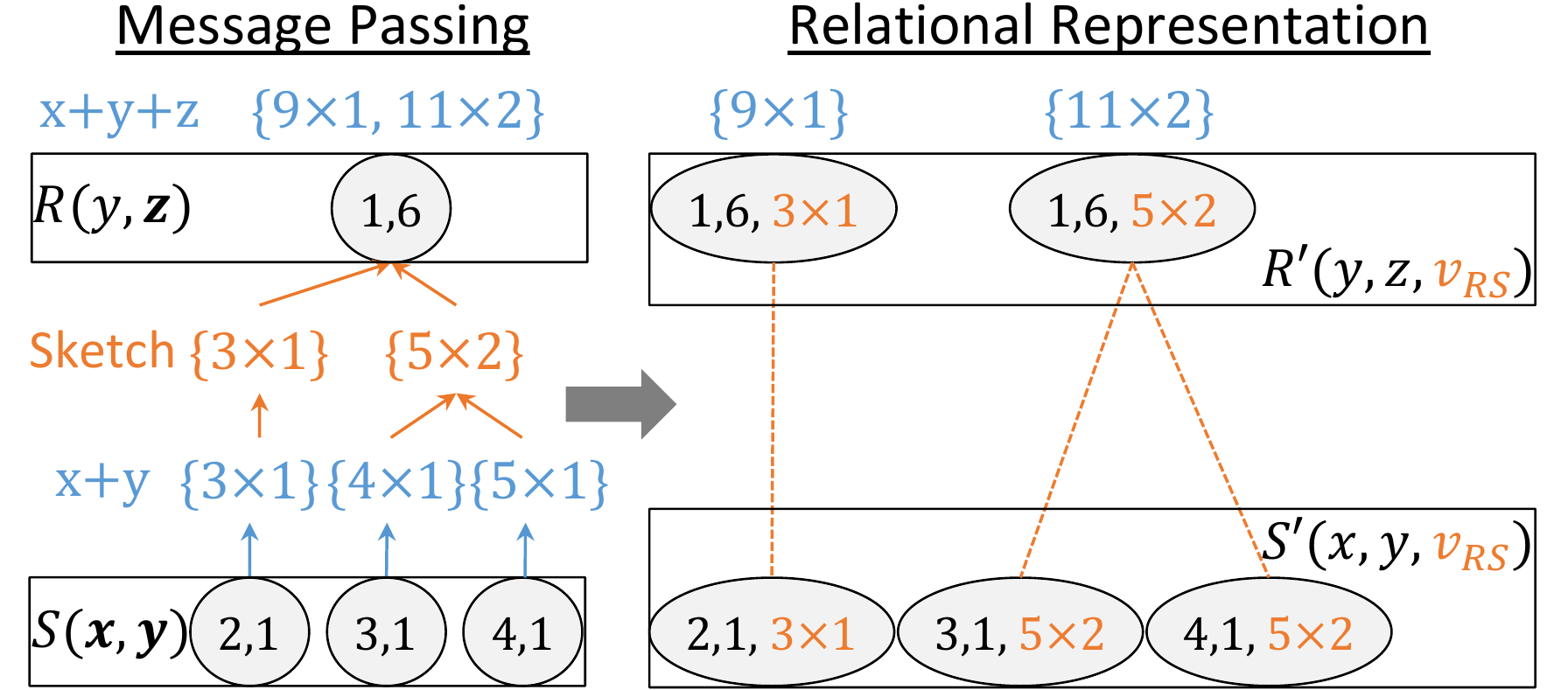}
\caption{Example of how we use the message passing framework~\cite{khamis21approx} to create a relational representation of the query answers that satisfy an inequality $x+y+z < \lambda$.}
\label{fig:sum}
\end{figure}

\begin{example}
\label{ex:sum}
\Cref{fig:sum} illustrates how we embed messages into the database relations
for a leaf relation $S$ and a parent relation $R$ with no other children,
and assuming weights equal to attribute values.
The messages from $S(x,y)$ to $R(y, z)$ are the sums $x+y$ (because $y$-weights are assigned to $S$).
After sketching their union using two buckets, sums $4$ and $5$ are both mapped to $5$;
we keep track of this with a multiplicity counter (shown as $\times 2$), reflecting the number of answers in the subtree.
Upon reaching relation $R$, the weight of the $R$-tuple (which is the $z$-value)
is added to all sums.
For our relational representation,
we duplicate the $R$-tuple and associate each copy with a unique sum.
A copy corresponds to a bucket in the sketch,
so we can trace its ``source'' $S$-tuples,
i.e., those that belong to that bucket.
Instead of joining with all $S$-tuples like before,
a copy now only joins with the source $S$-tuples of the bucket
via a new variable $v_{RS}$ that stores the sum and the multiplicity.
\end{example}

\introparagraph{Adjusting the sketch buckets}
An issue we run into with the approach above is that the sum sent by a single tuple
may be \emph{assigned to more than one bucket} during sketching.
To see why this is problematic, consider a tuple $t'$ that sends $\sigma(t') = 5 \times 10$.
For simplicity, assume these are the only values to be sketched and that the two buckets 
contain $5 \times 3$ and $5 \times 7$.
With these buckets, we will create two copies of a tuple $t$ in the parent and both will join with $t'$,
because $t'$ is the source tuple for both buckets.
By doing so, we have effectively doubled the number of (partial) query answers
since there are now $10$ answers in the subtree of \emph{each} copy.
To resolve this issue, we need to guarantee that all elements in $\sigma(t')$
are assigned to the same bucket.
We adjust the sketching $\mathbb{S}_\epsilon(L)$ of a multiset $L$ as follows.
The $r$ buckets are determined by an increasing sequence of $r+1$ 
indexes on
an array that contains $L$ sorted.
The first index is $0$ and the last index is equal to $|L|-1$
(where $|L|$ takes into account the multiplicities).
Consider three consecutive indexes $i, j, k$
which define two consecutive buckets where the 
values at the borders are the same, i.e., 
$L[j-1] = L[j]$.
Let $j'$ and $j''$ be the smallest and largest indexes that contain $L[j]$
in the buckets $i-j$ and $j-k$, respectively.
We replace the indexes $i, j, k$ with $i, j', j''+1, k$
(and if 2 consecutive indexes are the same, then we remove that bucket).
As a result,
all values $L[j]$ from these two buckets now fall into the same bucket.
We repeat this process for every two consecutive buckets.
This can at most double the number of buckets, which,
as we show, does not affect the guarantee of \Cref{lem:sketch}.

\introparagraph{Binary join tree}
The running time of our algorithm (in particular the logarithmic-factor exponent)
depends on the maximum number of children of a join-tree node.
This is because we handle each parent-child node pair separately,
and each child results in the parent relation growing 
by the size of the messages, which is logarithmic.
To achieve the time bound stated in \Cref{lem:approx_trim},
we impose a binary join tree, i.e., every node has at most two children.
Such a join tree can be constructed by creating copies of a node that has
multiple children,
connecting these copies in a chain,
and distributing the original children among them.
In the worst case, this doubles the number of nodes in the join tree
(hence the number of relations that we materialize), but it does not affect the data complexity.

\section{Conclusions}
\label{sec:conclusions}

We can often answer quantile queries over joins of multiple relations much more efficiently than it takes to materialize the result of the join. Here, we adopted \e{quasilinear time} as our yardstick of efficiency. With our divide-and-conquer technique,
we recovered known results (for lexicographic orders) and established new ones (for partial sums, minimum, and maximum). We also showed how the approach can be adapted for deterministic approximations. 

We restricted the discussion to \JQs, that is, full Conjunctive Queries (CQs), and left open the treatment of non-full CQs (i.e., joins with projection). 
Most of our algorithms apply to CQs that are acyclic and free-connex, 
but it is not yet clear to us whether our results cover all tractable cases (under complexity assumptions).
More general open directions are the generalization of the challenge to \e{unions} of CQs, and the establishment of nontrivial algorithms for general CQs beyond the acyclic ones.

\begin{acks}
This work was supported in part by NSF
under award numbers IIS-1762268 and IIS-1956096. 
Benny Kimelfeld was supported by the German Research Foundation (DFG) Project 412400621.
Nikolaos Tziavelis was supported by a Google PhD fellowship.
\end{acks}

\bibliographystyle{ACM-Reference-Format}
\bibliography{bibliography.bib}

\appendix

\section{Nomenclature}

\begin{table}[h]
\centering
\small
\begin{tabularx}{\linewidth}{@{\hspace{0pt}} >{$}l<{$} @{\hspace{2mm}}X@{}} %
\hline
\textrm{Symbol}		& Definition 	\\
\hline
    Z               & generic set \\
    L               & generic multiset \\
	R,S,T,R_1,R_2		& relation \\
	V, V_1, V_2 & atom/hyperedge/node of join tree \\
	\calS       & schema \\
	D           & database (instance) \\
        n           & size of $D$ (number of tuples) \\	
	\dom        & database domain \\
	t			& tuple \\	
	x,y,z,u,v		& variable \\	
	Q			& Join Query (\JQ)	\\
    \ell		& number of atoms in a \JQ \\
    \var(Q)     & variables contained in $Q$ \\
    Q(D)        & set of answers of $Q$ over $D$ \\
    (Q \wedge P)(D) & subset of $Q(D)$ answers that satisfy a predicate $P$ \\
    q \in Q(D)  & query answer \\
	\calH(Q) = (V, E)	& hypergraph associated with query $Q$ \\
	T   & join tree \\

    \quant    & fraction in $[0, 1]$ used to ask for a quantile \\
 
    \w    & weight function for query answers \\
    \Wdom   & domain of weights \\
    \Wvars  & subset of variables that participate in the ranking \\
    \w_x     & input weight function for variable $x$: $\dom \rightarrow \Wdom$\\
    \w_R     & input weight function for tuples of relation $R$: $\dom^{\ar(R)} \rightarrow \Wdom$ where $\ar(R)$ is the arity of $R$\\
    \Waggr    & aggregate function that combines input weights to derive weights for query answers \\
    \w(\Wvars) & aggregate function applied on the weighted variables, i.e., $\Waggr(\{\w_x(x) | x \in \Wvars \})$ \\
    \lambda & a weight from $\Wdom$ \\

\hline
\end{tabularx}
\end{table}

\section{Details of Divide-and-Conquer Framework}
\label{sec:details_quantile}

\Cref{alg:quantile_approx} returns the desired (approximate) quantile for a given \JQ,
database, and ranking function, as
presented in \Cref{sec:quantile-alg}.
The exact version is obtained by simply setting $\epsilon=0$.

\subsection{Proof of \Cref{lem:quantile_approx}}

Let $Q_i$ and $D_i$ be the \JQ and database at the start of iteration $i$, with $i \geq 0$
(these are the variables $Q'$ and $D'$ in \Cref{alg:quantile_approx}).
Even though trimmings may increase the size of our queries by a constant factor (by the definition of trimming),
all queries $Q_i$ have constant size.
This is because we start every iteration with the original query $Q$
and every other query we construct is the result of applying at
most two consecutive trimmings.

First, we bound the number of iterations.
Iteration $i$ is guaranteed to eliminate at least $c|Q_i(D_i)|$ query answers
because (i) we select a $c$-pivot to partition
and (ii) the lossy trimmings may result in more query answers being eliminated than they should, but never less.
Consequently, at the beginning of iteration $i$, we have at most $(1-c)^i|Q(D)|$ query answers remaining.
The number of query answers is bounded by $n^\ell$ where $\ell$ is the number of atoms in $Q$.
If $I$ is the total number of iterations, then
$I \leq \lceil \log_{1/(1-c)} |Q(D)| \rceil \leq \lceil \ell  \log_{1/(1-c)} n \rceil = \bigO(\log n)$ since $c$ and $\ell$ are constants.

Second, we show that the returned answer is an $\epsilon$-approximate quantile.
The less-than partition $Q_{\texttt{lt}}(D_{\texttt{lt}})$ is constructed by trimming the inequalities $\w(\Wvars) \prec \w(p)$
and 
$\w(\Wvars) \succ \texttt{low}$ with some error $\epsilon'$, where $\texttt{low}$ lower-bounds the weights of the candidate query answers.
Because these two trimmings are lossy, we ``lose'' a number of query answers which are at most 
$2 \epsilon' |Q(D)|$.
These are the answers that satisfy the inequalities, but do not appear in $Q_{\texttt{lt}}(D_{\texttt{lt}})$.
As \Cref{fig:approximate_trimming} illustrates, all answers not contained in the less-than or greater-than partition,
including these lost query answers, are assumed to be contained in the equal-to partition which we do not explicitly count.
We now bound the distance between the desired index and the index of the answer that our algorithm returns.
Each iteration $i$ starts with an index $k_i$
and results in a new index $k_i'$,
which the following iteration is asked to retrieve (or, in case this is the last iteration, the index that is returned).
Note that in \Cref{alg:quantile_approx} the variable $k$ indexes the subarray of query answers that are currently candidates;
thus it is offset by the index of the answer with weight $\textrm{low}$.
Here, the indexes $k_i$ and $k_i'$ refer to the original array that contains all the query answers.
Suppose that $k_i$ falls into the less-than partition at the beginning of the iteration.
Then, if $k_i'$ is different than $k_i$, it has to be a higher index 
because of lost query answers that precede it and which are moved to the middle equal-to partition
(see \Cref{fig:approximate_trimming}).
Thus, $|k_i - k_i'| \leq 2 \epsilon' |Q(D)|$.
If $k_i$ falls into the equal-to partition, then we still choose that partition and return the pivot
because the size of the partition can only increase from the lossy trimmings.
For the greater-than partition, the analysis is symmetric to lower-than
since the lossy trimmings of the latter do not affect the indexes of the former.
To conclude, the accumulated absolute error is
$I \cdot 2 \epsilon' |Q(D)| \leq 2 \lceil \ell \log_{1/(1-c)} n \rceil \epsilon' |Q(D)|$.
To obtain an $\epsilon$-approximate quantile of $Q(D)$,
we set $\epsilon' = \frac{\epsilon}{2 \lceil \ell \log_{1/(1-c)} n \rceil}$.

Finally, we prove the running time.
Since our trimmings return acyclic \JQs,
the answers of all queries we construct
can be counted in linear time. 
Thus, the running time per iteration is $\bigO(g_p(n) + 4 g_t(n, \epsilon') + n)$
which is $\bigO(\max \{ g_p(n), g_t(n, \epsilon') \}$ since $g_p(n)$ and $g_t(n, \epsilon')$
are necessarily $\Omega(n)$.

We note that this proof also covers \Cref{lem:quantile_exact} since
\Cref{lem:quantile_approx} is a stronger version of it.

\begin{algorithm}[t]
\footnotesize
\textbf{Input}: acyclic \JQ $Q$, database $D$, ranking function $(\w, \preceq)$, quantile $\quant$,
approximation bound $\epsilon$\\
\textbf{Output}: the $\quant$-th quantile of $Q(D)$\\

\algocomment{Calculate desired index}\;
Determine $|Q(D)|$ and set $k = \lfloor \quant \cdot |Q(D)| \rfloor$ (with zero-indexing)\;

\algocomment{Calculate parameter for trimming ($\epsilon=\epsilon'=0$ for exact)}\;
$\epsilon' = \frac{\epsilon}{2 \lceil \ell \log_{1/(1-c)} n \rceil}$

\algocomment{Each iteration modifies $Q'(D')$ by bringing $\texttt{low}$ and $\texttt{high}$ closer}\;
$(Q', D', \texttt{low}, \texttt{high}) = (Q, D, \bot, \top)$\;

\While{$|Q'(D')| > |D|$}
{
    \algocomment{Select a $c$-pivot $p$}\;
    $(p, c) = \PIVOT(Q', D', (\w, \preceq))$\;

    \algocomment{Partition}\;
    $(Q_{\texttt{lt}}, D_{\texttt{lt}}) = \TRIM(Q, D, \w(\Wvars) \prec \w(p), \epsilon')$\;
    $(Q_{\texttt{lt}}, D_{\texttt{lt}}) = \TRIM(Q_{\texttt{lt}}, D_{\texttt{lt}}, \w(\Wvars) \succ \texttt{low}, \epsilon')$\;

    $(Q_{\texttt{gt}}, D_{\texttt{gt}}) = \TRIM(Q, D, \w(\Wvars) \succ \w(p), \epsilon')$\;
    $(Q_{\texttt{gt}}, D_{\texttt{gt}}) = \TRIM(Q_{\texttt{gt}}, D_{\texttt{gt}}, \w(\Wvars) \prec \texttt{high}, \epsilon')$\;

    \algocomment{Choose partition}\;
    Set $|Q_{\texttt{eq}}(D_{\texttt{eq}})|$ to 
    $|Q'(D')| - |Q_{\texttt{lt}}(D_{\texttt{lt}})| - |Q_{\texttt{gt}}(D_{\texttt{gt}})|$ \;
    
    \If{$k < |Q_{\texttt{lt}}(D_{\texttt{lt}})|$}
    {
        $(Q', D', \texttt{high}) = (Q_{\texttt{lt}}, D_{\texttt{lt}}, \w(p))$\;
    }
    \ElseIf{$k < |Q_{\texttt{lt}}(D_{\texttt{lt}})| + |Q_{\texttt{eq}}(D_{\texttt{eq}})|$}
    {
        \Return $p$\;
    }
    \Else
    {
        $(Q', D', \texttt{lt}) = (Q_{\texttt{gt}}, D_{\texttt{gt}}, \w(p))$\;
        $k = k - |Q_{\texttt{lt}}(D_{\texttt{lt}})| - |Q_{\texttt{eq}}(D_{\texttt{eq}})|$\;
    }
}

Materialize and sort $Q'(D')$\;
\Return answer at index $k$ in $Q'(D')$\;

\caption{Pivoting Algorithm}
\label{alg:quantile_approx}
\end{algorithm}

\begin{figure}[t]
\centering
\includegraphics[scale=0.75]{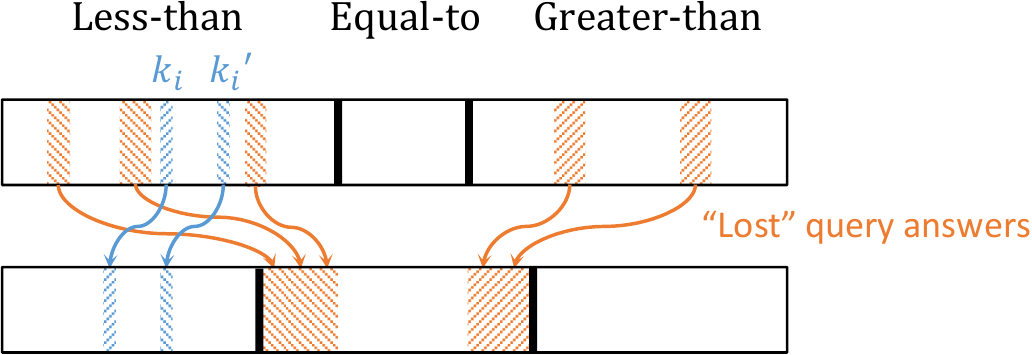}
\caption{Proof of \Cref{lem:quantile_approx}: Query answers that are ``lost'' due to lossy trimmings are implicitly moved to the equal-to partition (middle).
Consequently, index $k_i$ in the less-than partition contains an element that was previously at a higher index $k_i'$, but $k_i'-k_i$ is bounded by the number of lost answers.}
\label{fig:approximate_trimming}
\end{figure}

\section{Details of Choosing a Pivot}
\label{sec:pivot_details}

\Cref{alg:pivot} shows the algorithm that returns a $c$-pivot for a given \JQ,
database, and ranking function, as
presented in \Cref{sec:pivot}.

\subsection{Proof of \Cref{lem:pivot_union_correctness}}

Let $b_1, \ldots, b_r$ be the joining groups from the children $V_1, \ldots, V_r$ of $V$,
We show by induction on the join tree that $\pivot(b_1), \ldots, \pivot(b_r)$, and $t$ all agree on their common variables.
For the leaf relations, we have no children and $\pivot(t)$ is initialized to $t$.
For the inductive step, let $x$ be a common variable between two children $V_i$ and $V_j$ of $V$,
$i,j \in [r]$.
Because of the running intersection property of the join tree, 
$x$ also needs to appear in the parent $V$.
Since the groups $b_i, b_j$ join with $t$,
all their tuples necessarily assign value $t[x]$ to variable $x$.
We show that $\pivot(b_i)$ also assigns $t[x]$ to $x$ and the case of $\pivot(b_j)$ is similar.
We have that $\pivot(b_i) = \pivot(t_i)$ for some tuple $t_i \in b_i$
where $\pivot(t_i)$ is picked as the weighted median of the group.
From the inductive hypothesis, $\pivot(t_i)$ needs to agree with $t_i$ on the value of $x$
which we argued is equal to $t[x]$.

\subsection{Proof of \Cref{lem:median_of_pivots}}

Without loss of generality, let the indexing of the $r$ sets be consistent with
the ordering of their $c$-pivots, i.e., $p_i \preceq p_j$
for $Z_i, Z_j, 1 \leq i \leq j \leq r$.
Let $p_m$ be the weighted median, selected from set $Z_m$ for some $m \in [r]$.
We prove that $p_m$ is greater than or equal to (according to $\preceq$) 
at least $\frac{c}{2} |Z_1 \cup Z_2 \cup \ldots \cup Z_r|$ elements,
and the case of less than or equal to is symmetric.
Because of the indexing we enforced, we know that $p_i \preceq p_m$ for all $i \in [m]$.
Combining that with the definition of a $c$-pivot (for $p_i$),
we obtain that
$p_m$ is greater than or equal to at least $c|Z_i|$ elements of $|Z_i|$,
or $c \sum_{i \in [1, m]} |Z_i|$ in total. 
Now, because the median is weighted by the set sizes and there is no overlap between their elements,
$\sum_{i \in [1, m]} |Z_i| \geq |Z_1 \cup Z_2 \cup \ldots \cup Z_r| / 2$.
Thus, $p_m$ is greater than or equal to at least 
$c|Z_1 \cup Z_2 \cup \ldots \cup Z_r| / 2$ elements
of $Z_1 \cup Z_2 \cup \ldots \cup Z_r$.

\subsection{Proof of \Cref{lem:union_of_pivots}}

Let $M$ be the partial answers rooted at $t$, and let $M_i$ be the partial answers
rooted at 
$V_i$ and restricted to those that agree with $t$ for all $i \in [r]$.
We only show that $t \cup p_1 \cup \ldots \cup p_r$ is greater than or equal to
at least 
$c^r|M|$ partial answers,
since the case of less than or equal to is symmetric.
For $i \in [r]$, let $L_i$ be the subset of $M_i$ answers
that are less than or equal to $p_i$.

We first show that 
$\w(t \cup q_1 \cup \ldots \cup q_r) \preceq 
\w(t \cup p_1 \cup \ldots \cup p_r)$
whenever $q_i \in L_i, i \in [r]$.
We know that $\w(q_i) \preceq \w(p_i)$ for all $i \in [r]$.
We proceed inductively in $i$, showing that 
$\w(q_1 \cup \ldots \cup q_i) \preceq \w(p_1 \cup \ldots \cup p_i)$.
The inductive hypothesis is that
$\w(q_1 \cup \ldots \cup q_{i-1}) \preceq \w(p_1 \cup \ldots \cup p_{i-1})$.
Each of these two terms is an aggregate over the values of variables
$U_1 \cup \ldots \cup U_{i-1}$ mapped to their weights,
where $U_i$ is the subset of weighted variables $\Wvars$ that appear in the subtree rooteed at node $V_i$.
For example, $\w(q_1 \cup \ldots \cup q_{i-1})$ is 
$\Waggr(\{\w_x((q_1 \cup \ldots \cup q_{i-1})[x]) \mid x \in U_1 \cup \ldots \cup U_{i-1} \})$.
By subset-monotonicity,
we can add to both aggregates the weighted values of $p_i$
without changing the inequality, i.e., we obtain
$\w(q_1 \cup \ldots \cup q_{i-1} \cup p_i) \preceq \w(p_1 \cup \ldots \cup p_{i-1} \cup p_i)$ (A).
With a similar argument of subset-monotonicity, we start from $\w(q_i) \preceq \w(p_i)$ and add to both sides the weighted values of $q_1 \cup \ldots \cup q_{i-1}$ to obtain
$\w(q_1 \cup \ldots \cup q_{i-1} \cup q_i) \preceq \w(q_1 \cup \ldots \cup q_{i-1} \cup p_i)$ (B).
(A) and (B) together prove the inductive step.
To complete the first part of the proof, 
we add to both aggregates the weighted values of $t$ 
(that do not appear in any child)
to obtain $\w(t \cup q_1 \cup \ldots \cup q_r) \preceq 
\w(t \cup p_1 \cup \ldots \cup p_r)$,
again by subset-monotonicity.

Next, notice that there are $|L_1 \times \ldots \times L_r|$ partial answers of the form $t \cup q_1 \cup \ldots \cup q_r$ with $q_i \in L_i, i \in [r]$.
Since every $L_i$ comprises of elements that are less than
or equal to a $c$-pivot, we have $|L_i| \geq c|M_i|$. Also notice that $|M|=\prod_{i \in [r]} |M_i|$. Overall, we have that $|L_1 \times \ldots \times L_r| \geq \prod_{i \in [r]} c|M_i|
= c^r |M|$, and so $t \cup p_1 \cup \ldots \cup p_r$ is greater than or equal to
at least 
$c^r|M|$ partial answers.

\begin{algorithm}[t]
\footnotesize
\textbf{Input}: acyclic \JQ $Q$, database $D$, ranking function $(\w, \preceq)$\\
\textbf{Output}: a $c$-pivot of $Q(D)$ ordered by $\preceq$, and the value of $c$\\

Convert attribute weights to tuple weights\;
Construct a join tree $T$ of $Q$ with artificial root $V_0 = \{ t_0 \}$\;
Materialize a relation for every $T$-node and group it by the variables it
has in common with its parent node\;
Initialize $\pivot(t) = t, \cnt(t) = 1$ for all tuples $t$ of all relations\; 
Initialize $c(R) = 1$ for leaf relations $R$\; 

\For{relation $R$ in bottom-up order of $T$}
{   
    \If{$R$ is not leaf}
    {
        $S_1, \ldots, S_r$ = children of $R$\;
        $c(R) = \frac{c(S_1)}{2} \times \ldots \times \frac{c(S_r)}{2}$\;
    }
    
    \For{tuple $t \in R$, child $S$ of $R$}
    {
            $b$ = join group of $S$ that agrees with the values of $t$\;
            \algocomment{Compute the weighted median (and the count of subtree answers) the first time we visit this group}\;
            \If{$\pivot(b)$ not already computed \label{alg:pivot-memoization}}
            {
                $\pivot(b)$ = $\wm_\preceq(\{ \pivot(t') \; | \; t' \in b\}, \beta)$ with $\beta(\pivot(t')) = \cnt(t')$ \label{alg:pivot-wmed}\;
                $\cnt(b) = \sum_{t' \in b} \cnt(t') $ \label{alg:pivot-cnt}\;
            }
            \algocomment{Combine results from different branches of the join tree}\;
            $\pivot(t) = \pivot(t) \cup \pivot(b)$ \label{alg:pivot-union}\;
            $\cnt(t) = \cnt(t) \times \cnt(b)$\;
    }
}
\Return $\pivot(t_0)$, $c(V_0)$\;

\caption{\PIVOT}
\label{alg:pivot}
\end{algorithm}

\section{Details of Exact Trimmings}
\label{sec:exact_trim_details}

\subsection{Proof of \Cref{lem:minmax_trim}}
\label{sec:proof_minmax}
We always start by creating fresh copies of relations to eliminate self-joins from $Q$.
This ensures that every column in the database corresponds to a unique variable,
avoiding situations like $R(x,y), R(y,x)$.

First, consider $\max_{x \in \Wvars} \w_x(x) < \wval$.
We scan the given database $D$ once and if a tuple $t$
contains a value $t[x]$ with $\w_x(t[x]) \geq \wval$ for a variable $x \in \Wvars$,
then we remove $t$ from the database.
This process removes precisely the answers $q \in Q(D)$ 
that do not satisfy the predicate,
since, for the maximum to be greater than or equal to $\wval$,
at least one variable needs to map to such a weight.
The \JQ we return is $Q$ itself.
The case of $\min_{x \in U} \w_x(x) > \wval$ is symmetric.

Second, consider $\max_{x \in \Wvars} \w_x(x) > \wval$.
\Cref{alg:max-trim} shows the pseudocode of \TRIM for this case.
If there are $r$ variables in $\Wvars$,
then we create $r$ databases, each enforcing condition $P_i$,
which is a conjunction of unary predicates.
The conditions $P_i$ partition the space of possible $\Wvars$ values
that satisfy $\max_{x \in \Wvars} \w_x(x) > \wval$.
To return a single database $D'$, 
we union together the copies of each relation 
and separate the different databases with a partition identifier $i \in [r]$.
This identifier is added as a variable $x_p$ to all atoms of the returned \JQ $Q'$.
As a consequence, each query answer of the returned $Q'$ can only draw values from database tuples that belong to the same partition.
The bijection from $Q'(D')$ to $Q(D)$ simply removes the variable $x_p$.
Since $r$ does not depend on $D$, the entire process can be done in linear time.
Furthermore, $Q'$ remains acyclic because every join tree of $Q$
is also a join tree of $Q'$ by adding $x_p$ to all nodes.
The case of $\min_{x \in \Wvars} \w_x(x) < \wval$ is symmetric.

\begin{algorithm}[t]
\footnotesize
\SetAlgoLined
\LinesNumbered
\textbf{Input}: acyclic \JQ $Q$, database $D$, predicate $\max(\Wvars) > \wval$\\
\textbf{Output}: acyclic \JQ $Q'$, database $D'$ \\

    $x_1, \ldots, x_r = \Wvars$\;
    $(Q', D') = (Q, \emptyset)$\;

    \algocomment{Construct the new \JQ}\;
    Eliminate self-joins from $Q'$ by materializing new relations in $D$\;
    Add the same variable $x_p$ to all the atoms and the head of $Q'$\;
    
    \algocomment{Create $r$ databases}\;
    \For {$i$ from 1 to $r$}
    {
        \algocomment{Each $P_i$ is a conjuntion of unary predicates}\;
        $P_i = \{ \w_{x_1}(x_1) \leq \lambda, \ldots, \w_{x_{i-1}}(x_{i-1}) \leq \lambda, 
                \w_{x_i}(x_i) > \lambda \}$\;

        $D_i =$ copy of $D$ with conditions $P_i$ applied\;

        \algocomment{An identifier separates the answers from different $D_i$ after the union}\;
        Add the column $x_p$ with value $i$ to all relations of $D_i$\;
    }

    \algocomment{Union the databases into one}\;
    \For {relation $R^D$ in $D$}
    {   
        Add to $D'$ relation $R^{D'} = \bigcup_{i \in [r]} R^{D_i}$ of database $D_i$\;
    }

    \Return $(Q', D')$\;

\caption{\TRIM for MAX}
\label{alg:max-trim}
\end{algorithm}

\subsection{Proof of \Cref{lem:lex_trim}}
Let $\wval = (\wval_1, \ldots, \wval_r)$.
The proof is the same as in the case of \rankmin/\rankmax (\Cref{sec:proof_minmax}),
except that the conditions we enforce in the $i^\textrm{th}$ of the $r$ copies of the database $D$
are
$P_i = \{ \w_{x_1}'(x_1) = \lambda_1, \ldots, \w_{x_{i-1}}'(x_{i-1}) = \lambda_{i-1}, 
                \w_{x_i}'(x_i) < \lambda_i \}$
for $\leq_\textrm{LEX}$
and
$P_i = \{ \w_{x_1}'(x_1) = \lambda_1, \ldots, \w_{x_{i-1}}'(x_{i-1}) = \lambda_{i-1}, 
                \w_{x_i}'(x_i) > \lambda_i \}$
for $\geq_\textrm{LEX}$.

\subsection{Proof of \Cref{th:partial_sum}}

First, we prove that the condition in our dichotomy
is equivalent to having the \ranksum variables
on one or two adjacent join tree nodes.

\begin{lemma}
\label{lem:adjacent_join_tree}
Consider the hypergraph $\calH(Q)$ of a \JQ $Q$
and a set of variables $\Wvars$. If $\calH(Q)$ is acyclic, 
any set of independent variables of $\Wvars$ is of size at most $2$, 
and any chordless path between two $\Wvars$ variables is of length at most $3$, then there exists a join tree for $Q$ where $U$ appears on one or two adjacent nodes.
\end{lemma}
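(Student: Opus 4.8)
The plan is to establish the following structural dichotomy on the vertices in $\Wvars$: either all of $\Wvars$ is contained in a single hyperedge (atom) of $\calH(Q)$, or $\Wvars$ can be split into two parts $\Wvars = \Wvars_1 \cup \Wvars_2$ with $\Wvars_1 \subseteq e_1$ and $\Wvars_2 \subseteq e_2$ for two hyperedges $e_1, e_2$, and moreover $e_1, e_2$ can be made adjacent in some join tree of $\calH(Q)$. First I would reason about the independence constraint: since any independent subset of $\Wvars$ has size at most $2$, the ``conflict graph'' on $\Wvars$ — where two variables are connected iff they co-occur in some atom — has independence number at most $2$, so its complement is triangle-free. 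Combined with acyclicity of $\calH(Q)$, I would argue this forces $\Wvars$ to be covered by at most two hyperedges. (If three hyperedges were genuinely needed, one could extract three pairwise-non-co-occurring variables, contradicting the independence bound; the bulk of the case analysis is making this rigorous using that the dual/incidence structure is a forest.)

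Next, assuming $\Wvars \subseteq e_1 \cup e_2$ for atoms $e_1, e_2$, I would use the chordless-path condition to force $e_1$ and $e_2$ to be close in the join tree. Take a variable $x \in \Wvars \cap e_1 \setminus e_2$ and $y \in \Wvars \cap e_2 \setminus e_1$ (the easy sub-case is when one of these differences is empty, since then one atom already contains all of $\Wvars$). Any path from $x$ to $y$ in $\calH(Q)$ must have length at most $3$ after shortcutting to a chordless path, which means there is a hyperedge sequence of length at most $3$ connecting $e_1$ to $e_2$; by the running-intersection property this bounds the distance between the nodes of the join tree containing $e_1$ and $e_2$. The remaining work is to show that, via standard join-tree manipulations (re-rooting, and using that any two nodes whose common variables form a connected subtree can be made adjacent by "contracting" along that subtree, or by choosing a join tree that realizes the GYO-style bag structure), we can actually pick a join tree where the bag holding $\Wvars \cap e_1$ and the bag holding $\Wvars \cap e_2$ are parent and child.

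The hard part will be the second step: turning the metric bound ``distance at most $3$ in the hypergraph'' into ``adjacent in \emph{some} join tree.'' Distance bounds alone do not immediately give adjacency, because the intermediate atoms on the path may carry variables that prevent collapsing them; the key observation to exploit is that since those intermediate atoms do not contain any $\Wvars$ variable other than possibly shared boundary ones, one can reroute the join tree — e.g., choose the join tree so that $e_1$ is the root and follow the running-intersection subtree of the shared variable down to a copy of (a superset of) $\Wvars \cap e_2$. I expect the cleanest argument is: (i) reduce to the connected component of $\calH(Q)$ containing $\Wvars$; (ii) if $\Wvars$ lies in one atom, done with a trivial join tree; (iii) otherwise pick the two atoms, invoke the chordless-path bound to get a short connecting witness, and then apply a known fact that in an acyclic hypergraph two "close" hyperedges can be realized as adjacent join-tree nodes (this can be proven directly by induction on the join tree, peeling leaves not on the connecting path). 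I would also double-check the boundary cases where $\Wvars \cap e_1 \cap e_2 \neq \emptyset$ and where $|\Wvars| \le 2$, since these are where off-by-one issues in "length at most $3$" (counting vertices versus edges) typically bite.
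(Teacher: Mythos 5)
Your first step (using the independence condition to cover $\Wvars$ by at most two atoms) is essentially the paper's opening move, at a comparable level of detail. The gap is in your second step. You want to convert the chordless-path hypothesis into a distance bound between the two covering hyperedges and then invoke a ``known fact'' that two close hyperedges of an acyclic hypergraph can be realized as adjacent join-tree nodes. No such fact holds: take $e_1=\{x,a\}$, $e_2=\{a,b\}$, $e_3=\{b,y\}$ with $\Wvars=\{x,y\}$. Here $e_1$ and $e_3$ are connected by a short hyperedge sequence, the unique intermediate atom $e_2$ contains no $\Wvars$ variable, and yet the only join tree of this hypergraph is the path $e_1-e_2-e_3$ (making $e_1,e_3$ adjacent breaks running intersection for $a$ or for $b$), so they can never be made adjacent. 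The property you propose to exploit --- that intermediate atoms carry no $\Wvars$ variables --- is exactly satisfied here and does not help; the true obstruction is non-$\Wvars$ variables that the interior of the path shares with \emph{both} endpoints. This configuration is excluded only because $x-a-b-y$ is a chordless path on $4$ vertices, which violates the hypothesis: the paper counts length in vertices, so ``length at most $3$'' allows at most one intermediate vertex. Under your edge-based reading the example above satisfies the hypothesis and the lemma would be false, so the off-by-one you flagged is not benign and must be resolved in the vertex-counting direction.

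The paper's argument does not bound any distance (the join-tree path between the two covering nodes can be arbitrarily long, e.g., when all interior nodes share a single variable with $R$). Instead it performs explicit surgery: it picks suitable nodes $R,S$ on the join-tree path that together cover $\Wvars$, lets $V$ be the set of interior-path variables appearing in $R$ or $S$, and argues that if $V\subseteq R$ or $V\subseteq S$ one can reattach $S$ directly to $R$ while preserving running intersection, and if some consecutive interior pair shares no variable one can cut there and add the edge $R-S$; in the only remaining case it constructs a chordless path $x-u-\cdots-v-y$ on at least $4$ vertices, with $x,y\in\Wvars$ private to $R$ and $S$ respectively and $u,v$ interior variables shared with $R$ only and $S$ only, contradicting the hypothesis. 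Your plan is missing this case analysis and, in particular, the construction of that witnessing chordless path, which is the sole place the path-length hypothesis actually enters; as written, the second half of your proof would not go through.
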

\begin{proof}

If there is one query atom that contains all $\Wvars$ variables, then we are done. 
Otherwise, since any set of independent variables of $\Wvars$ is of size at most $2$, 
then there are $2$ atoms 
that together contain all $\Wvars$ variables. 
Indeed, consider any $3$ atoms. 
If each of them has a $\Wvars$ variable that does not appear in the other two, 
then these three variables are an independent set
of size $3$, which contradicts our condition. 
Thus, $2$ of these atoms contain all $\Wvars$ variables that appear in the $3$ atoms. 
By applying this repeatedly to the selected $2$ atoms and an untreated atom until all atoms are treated, 
we get $2$ atoms that contain all of $\Wvars$ variables.

Since $Q$ is acyclic, it has a join tree. Let $R'$ and $S'$ be two join-tree nodes that together 
contain all of $\Wvars$.
Consider the path $P'$ from $R'$ to $S'$ in the join tree.
Let $R$ be the last node on $P'$ that contains all $\Wvars$ variables that are in $R'$, 
and let $S$ be the first node on $P'$ that contains all $\Wvars$ variables that are in $S'$.
If $R$ and $S$ are neighbors, we are done.
Otherwise, we show we can find an alternative join tree where they are neighbors.
Consider the path $P$ from $R$ to $S$ in the join tree.
Let $V$ be all the variables that appear 
on the path between $R$ and $S$ (not including $R$ and $S$),
such that each variable in $V$ appears in either $R$ or $S$ (or both). 
We consider three cases.
The first case is $V \subseteq R$.
We directly connect $R$ and $S$ and remove the edge connecting $S$ to the node preceding it on the path from $R$.
The running intersection property is maintained as for each variable, the nodes containing this variable remain connected.
The second case is $V \subseteq S$. It is handled similarly by directly connecting $R$ to $S$ and
removing the edge from $R$ to its succeeding node on the path to $S$.
The third case is that a variable $u \in V$ appears in $R$ but not in $S$ and another variable $v \in V$ appears in $S$ but not in $R$.
Since $R$ is the last in $P$ to contain all $\Wvars$ variables of $R'$, there is a variable $x \in \Wvars$ that appears in $R$ but nowhere else in $P$. 
Similarly, there is a variable $y \in \Wvars$ that appears in $S$ and nowhere else in $P$.
If every two consecutive nodes on $P$ share a variable, then we have a chordless path $x-u-\ldots-v-y$ of length at least $4$, contradicting our condition. 
Otherwise, we remove the edge between the two nodes that do not share a variable, and add an edge between $R$ and $S$, which preserves the running intersection property.
\end{proof}

We now show the dichotomy of \Cref{th:partial_sum}.

For the positive side, we apply
\Cref{lem:adjacent_join_tree}.
When all $\Wvars$ variables are contained in a single join-tree node,
trimming can be done in linear time by filtering the corresponding relation.
When they are contained in two adjacent join-tree nodes,
$\bigO(n \log n)$ trimming follows from \Cref{lem:partial_sum_trim}.
Combining these two cases with \Cref{lem:quantile_exact,lem:pivot} 
completes the proof of the positive side.

For the negative side, there are $3$ cases.
1) If $Q$ is cyclic, an answer to \%\JQ would also answer the 
decision problem of whether $Q$ has any answer, which precludes time $\bigO(n \polylog n)$ assuming \hyperclique{}~\cite{bb:thesis}.
Assume $Q$ is acyclic.
2) If there exists a set of independent variables of $\Wvars$ of size $3$, 
selection by \ranksum is not possible in $\bigO(n^{2-\varepsilon})$ for all $\varepsilon>0$ assuming \ThreeSUM{}~\cite[Corollary 7.11]{carmeli23direct}. 
Since we can count the answers to an acyclic \JQ in linear time, the selection problem and \%\JQ are equivalent.
3) If there is a chordless path between two $\Wvars$ variables of length $4$ or more, we apply a known reduction~\cite[Lemma 7.13]{carmeli23direct} to show that solving \%\JQ in quasilinear time can be used to detect
a triangle in a graph in quasilinear time, which is not possible assuming \hyperclique{}. 
There are two ways in which the statement of that lemma differs from our needs: 
first, all variables there were allowed to participate in the ranking.
However, the reduction only assigns non-zero weights to the first and last variables in the path, so this difference is non-essential. 
Second, the path there contains exactly $3$ atoms (i.e., $4$ variables); 
if our path is longer, we simply make the the remaining relations equality, and the rest of the proof is the same.

\section{Details of Approximate Trimming for \ranksum}
\label{sec:approx_details}

\Cref{alg:sum_trim} shows the pseudocode of our lossy trimming for \ranksum.

\subsection{Proof of \Cref{lem:approx_trim}}
\label{sec:proof_approx_trim}

\begin{algorithm}[t]
\footnotesize
\textbf{Input}: acyclic \JQ $Q$ with $\ell$ atoms, database $D$, predicate $\sum_{x \in \Wvars} \w_x(x) < \wval$, approximation bound $\epsilon$\\
\textbf{Output}: acyclic \JQ $Q'$, database $D'$\\

Convert attribute weights to tuple weights\;
Construct a binary join tree $T$ of $Q$, set an arbitrary root\;
Materialize a relation for every $T$-node and group it by the variables it
has in common with its parent node\;
Initialize $\sumset(t) = (\sigma_s(t), \sigma_m(t)) = (\w(t), 1)$ for all tuples $t$ of all relations\; 
$\epsilon' = \frac{1}{4^\ell} \epsilon$\;

\For{relation $R$ in bottom-up order of $T$}
{
    \For{child $S$ of $R$}
    {Add variable $v_{RS}$ to $R$ and $S$ in $Q$, and corresponding columns in $D$\;
    }
    \For{tuple $t \in R$, child $S$ of $R$}
    {
        $b$ = join group of $S$ that agrees with the values of $t$\;
        \algocomment{Sketch messages the first time we visit this group}\;
        \If{$\sumset(b)$ not already computed }
        {
            $\sumset(b) = \mathbb{S}_{\epsilon'}(\cup_{t' \in b} \sigma(t'))$ such that each value falls into a single bucket \label{alg_line:sketch}\;

            \algocomment{A bucket $e$ in the sketch is described by a sum $e_s$, multiplicity $e_m$, and a set of source tuples from $S$}\;
            \For{bucket $e \in \sumset(b)$ with source tuples $S_e \subseteq S$}
            {
                \algocomment{Add the bucket values to the child column}\;
                 $t_e[v_{RS}] = (e_s, e_m)$ for all $t_e \in S_e$\;
            }
        }

        \For{bucket $e \in \sumset(b)$}
        {
            \algocomment{Add the bucket values to the parent column}\;
             Create a copy $t_e$ of $t$ in $R$ with $t_e[v_{RS}] = (e_s, e_m)$\;
             $\sigma(t_e) = (\sigma_s(t) + e_s, \sigma_m(t) \times e_m)$
             \label{alg_line:sum_update}\;
        }
        Remove $t$ from $R$\;
    }
}

Remove all tuples $t$ from the root relation with $\sigma_s(t) \geq \wval$ \label{alg_line:root_removal}\;

\Return $Q, D$\;

\caption{Approximate \TRIM for \ranksum}
\label{alg:sum_trim}
\end{algorithm}

\introparagraph{Preservation of \JQ answers}
Let $Q'$ and $D'$ be the returned \JQ and database.
We argue that, before removing
the root tuples that violate the inequality (\Cref{alg_line:root_removal}),
the \JQ answers are preserved
in the sense that there exists a bijection from $Q'(D')$ to
$Q(D)$ which simply removes the new variables.
Consider the step where we introduce variable $v_{RS}$ between parent $R$
and child $S$.
Let $t \in R$ be a tuple in the original database $D$
and $b$ the join group in $S$ that agrees with $t$.
Then, every tuple $t' \in b$ joins with exactly one copy of $t$ after
the introduction of $v_{RS}$.
This is because there is a copy of $t$ for each bucket (with the bucket identifier in $v_{RS}$)
and our bucket adjustment guarantees that
the weight of $t'$ is assigned to precisely one bucket.

\introparagraph{Error from sketch adjustment}
Recall that in our sketch $\mathbb{S}_\epsilon(L)$ of a multiset $L$
we made the adjustment
that if $i, j, k$ are three consecutive indexes in the bucketization,
$L[j-1] = L[j]$, and $j', j''$ are the smallest and largest indexes that contain $L[j]$
in the two consecutive buckets,
then we replace $i, j, k$ with $i, j', j''+1, k$.
We say that a multiset is an $\epsilon$-sketch of another multiset if it satisfies the guarantee of \Cref{lem:sketch}.
Also, let $\mathbb{S}$ be the original sketch with approximation error $\epsilon$
(see \Cref{lem:sketch})
and $\mathbb{S'}$ be the resulting sketch.
What we will show is that $\mathbb{S'}$ is an $\epsilon$-sketch of $L$.
In particular, we claim that 
$\countlessthan{\mathbb{S}}{\wval} \leq \countlessthan{\mathbb{S'}}{\wval} \leq \countlessthan{L}{\wval}$
for all values of $\wval$.
Our adjustment can only change elements in the index ranges $[i,j')$ and $[j, j'')$,
while all other elements stay the same since the largest element in their bucket continues to be the same.
The elements that can potentially change may only decrease in value because the upper index of their bucket is now smaller (but they may not decrease beyond $L[i]$ and $L[j]$ respectively).
Consequently, if $\wval < L[i]$ or $\wval \geq L[j'']$ then $\countlessthan{\mathbb{S}}{\wval} = \countlessthan{\mathbb{S'}}{\wval}$.
If $L[i] \leq \wval < L[j]$, then $\countlessthan{\mathbb{S}}{\wval} \leq \countlessthan{\mathbb{S'}}{\wval}$
because all elements in this bucket were mapped to $Z[j]$ in $\mathbb{S}$ but now they are
mapped to a number that can only be smaller, and thus closer to their original value.
If $L[j] \leq \wval < L[j']$, all elements in that bucket are equal to $L[j]$, 
thus $\countlessthan{\mathbb{S'}}{\wval} = \countlessthan{L}{\wval}$.

\introparagraph{Approximation guarantee}
Let us introduce the notation and tools we need.
Recall that each tuple $t$ computes $\sigma(t)$
that represents the approximate sum of partial query answers in its subtree.
Let $\textrm{cp}(t) = \{ t_1, \ldots, t_r \}$ be the copies of $t$ that we create in our algorithm,
$W_t$ be the partial query answers in the subtree of $t$
mapped to their weights,
$\textrm{jg}_S(t)$ be the join group of relation $S$ that joins with a tuple $t$ of the parent relation,
and $\otimes$ be the pairwise summation operator for multisets.
Abo-Khamis et al.~\cite{khamis21approx} have shown that
if $L_1'$ is an $\epsilon_1$-sketch of $L_1$
and $L_2'$ is an $\epsilon_2$-sketch of $L_2$,
then $L_1' \uplus L_2'$ is a $\max\{\epsilon_1,\epsilon_2\}$-sketch of $L_1 \uplus L_2$
and $L_1' \otimes L_2'$ is an $(\epsilon_1+\epsilon_2)$-sketch of $L_1 \otimes L_2$.
Additionally, an $\epsilon_1$-sketch of an $\epsilon_2$-sketch
is a $(2 \max\{\epsilon_1,\epsilon_2\})$-sketch
(using the definition and that $(1-\epsilon)^2 \geq 1-2 \epsilon$).
With these, we will show that the removal of root-node tuples (\Cref{alg_line:root_removal})
removes the \JQ answers that fall into buckets with values greater than or
equal to $\lambda$ in an $\epsilon$-sketch of the multiset 
$\{ \w(q) | q \in Q(D) \}$.
Note that in the algorithm, we apply sketching with
$\epsilon' \leq \epsilon$
(\Cref{alg_line:sketch}).

First, we prove inductively that
for a tuple $t \in R$ where $R$ is a relation at level $d$
(i.e., the maximum-length path from $R$ to a leaf node is $d$),
 $\uplus_{t_i \in \textrm{cp}(t)} \sigma(t_i)$ is a
 $(4^d \epsilon')$-sketch of $W_t$.
Each weight in $W_t$ is the sum of the weight of $t$
and the weights of the joining partial answers
(in the original database $D$)
from the child relations,
i.e.,
 $W_t = \{\w(t)\} \otimes (\bigotimes_{S} 
\bigcup_{t' \in \textrm{jg}_S(t)} W_{t'}))$.
If $t$ joins with a tuple $t'$ of a child relation $S$,
then it needs to join with all copies $\textrm{cp}(t')$
that were created when we handled $S$ and its children. 
The algorithm computes the values $\sigma(t_i)$ as follows:
$\uplus_{t_i \in \textrm{cp}(t)} \sigma(t_i) = \{\w(t)\} \otimes
(\bigotimes_{S} 
\mathbb{S}_{\epsilon'}(
\biguplus_{t' \in \textrm{jg}_S(t)} \biguplus_{t_j' \in \textrm{cp}(t')} \sigma(t_j')
)
)$.
We know inductively that $\biguplus_{t_j' \in \textrm{cp}(t')} \sigma(t_j')$
is a $(4^{d-1} \epsilon')$-sketch of $W_{t'}$.
The error bound of the sketch remains $4^{d-1} \epsilon'$
after the union,
then becomes $2 \cdot 4^{d-1} \epsilon'$ after applying the $\epsilon'$-sketch,
and finally $2 \cdot 2 \cdot 4^{d-1} \epsilon' = 4^{d} \epsilon'$ after taking the pairwise sums between
the two children.

Second, we claim that the height of the binary join tree we construct
is no more than $\ell$, where $\ell$
is the number of atoms of $Q$.
To see why, note that the new nodes we introduce in order to make the tree binary cannot be leaves and will always have $2$ children.
Suppose that there exists a root-to-leaf path of length greater than $\ell$.
For every new node on the path, there must be an original node that is a descendant of it, but not on this path.
This implies that the number of original nodes would be greater than $\ell$, which is a contradiction.
To conclude, we get $\epsilon$-sketches of $W_t$ for tuples $t$ at the root level if
we set $\epsilon' = \frac{1}{4^\ell} \epsilon$.
Their union is an $\epsilon$-sketch of $\{ \w(q) | q \in Q(D) \}$.

\introparagraph{Returned \JQ properties}
The fact that the \JQ $Q'$ that we return is acyclic is evident from the fact
that every variable $v_{RS}$ that we introduce appears in two adjacent 
nodes of the join tree of $Q$.
Therefore, $Q'$ also has a join tree.

\introparagraph{Running time}
The size of any relation is initially bounded by $n$.
Consider the step where we handle a relation and increase its size by creating copies of its tuples.
The sizes of the child relations (which have already been handled) have size bounded by $n' \geq n$.
The total size of the messages sent from the children
is $\bigO(\log_{1+\epsilon'} n')$
because the messages are sketched. 
The parent relation receives the messages of a child and 
creates copies of its tuples whose number is
equal to the message size.
Since we have at most 2 children, the size of the parent relation becomes $\bigO(n (\log_{1+\epsilon'} n')^2)$.
Applying this for every relation bottom-up, we can conclude that all relations after the algorithm terminates 
have size $\bigO(n (\log_{1+\epsilon'} n)^2)$
(because the double-logarithmic terms are dominated).
Changing base, this is
$\bigO(n \frac{\log^2 n}{\log^2 (1+\epsilon')})$
or $\bigO(\frac{1}{\epsilon^2} n \log^2 n)$
since $\epsilon' = \Theta(\epsilon)$
and also
$\log(1+\epsilon)$ is very close to $\epsilon$ for small $\epsilon$.
All other operations of the algorithm are linear in this size,
except for sketching, which is only done once for each join group.
A sketch of a multiset $L = (Z,\beta)$ can be computed in $\bigO(|Z| \log |Z|)$
by sorting.
Since $\bigO(\log(\frac{1}{\epsilon^2} n \log^2 n)) = \bigO(\log \frac{n}{\epsilon})$, we get the desired time bound
$\bigO(\frac{1}{\epsilon^2} n \log^2 n \log \frac{n}{\epsilon})$.

\end{document}